\documentclass[11pt]{article} 
\usepackage{geometry}
\usepackage{authblk}
\usepackage{amsfonts, amsmath, amsthm, amssymb, enumerate, rotate}
\usepackage{amssymb, bm, bbm, latexsym, mathrsfs}
\usepackage{graphicx}% Required for inserting images
\usepackage{color, url}
\usepackage{times}
\usepackage[english]{babel}
\newtheorem{theorem}{Theorem}[section]

\usepackage{algorithm}
\usepackage{algorithmic}
\usepackage[numbers]{natbib}
\usepackage[title]{appendix}
\usepackage{url}
\usepackage{epstopdf}
\usepackage{paralist} % used for: compactitem
\usepackage{float}

\begin{document}

\title{Predictive Power Analysis of \\Multiple Test Procedures Under Arbitrary Dependence}
\author[1]{George Karabatsos}
\affil[1]{Departments of Mathematics, Statistics, and Computer Sciences\protect\\
and Educational Statistics \protect
\\ University of Illinois-Chicago \protect\\
e-mail: gkarabatsos1@gmail.com, georgek@uic.edu}
\date{\today}

\maketitle
\begin{abstract}
Many statistical problems can be addressed by applying a multiple testing procedure (MTP) that controls either the Family-wise Error Rate (FWER) or False Discovery Rate (FDR) under unknown arbitrarily-interdependent $p$-values, without explicitly modeling these inter-correlations. They include the FWER-controlling \citet{Bonferroni36} MTP and \citet{Holm79} MTP; the FDR-controlling \citet{BenjaminiYekutieli01} MTP; and the DP-MTP \citep{Karabatsos25}, based on a Dirichlet process (DP) prior distribution supporting the entire space of MTPs that control either the FWER or FDR. For such an MTP, this study introduces a new and congenial method for Bayesian predictive power analysis, for power calculation and sample size determination for any given planned future (e.g., replication or interim) study. This novel MTP predictive power analysis method is based on a joint prior distribution defining a scale matrix mixture of asymmetric multivariate normal mean-variance mixture distributions, factorized as a general prior distribution for effect sizes (e.g., obtained from expert judgment or results of prior studies), and a uniform prior distribution for correlation matrices representing arbitrary dependencies between $p$-values of test statistics of given multiple hypothesis tests under their alternative hypotheses. The new MTP power analysis method also results in $p$-value weights which can be used to minimize the relative impacts of and assess for significance-chasing biases (e.g., publication bias, $p$-hacking, etc.) in multiple testing, without needing to assume that $p$-values (effect sizes) are independent. Previous MTP power analysis methods are conditional in that they assume fixed effect sizes and a fixed correlation matrix for test statistics ($p$-values) which may be difficult to specify; while, incongenially, not fully accounting for their uncertainty, especially for MTPs that control either the FWER or FDR for arbitrarily correlated $p$-values. The new simulation-based MTP predictive power analysis method is illustrated through the analysis of $p$-values obtained by a famous study of lead exposure and re-analyzed by the previous MTP literature, using \texttt{R} package \texttt{bnpMTP}.% 332 words (Notepad++ menu: View > Summary)
\newline
\newline
\textbf{Keywords: } Multiple Testing; Dirichlet Process (DP); Family-Wise Error Rate (FWER); False Discovery Rate (FDR).
\end{abstract}

\newpage
\section{Introduction}\label{Section:Introduction}

Multiple hypothesis testing problems continue to arise in many fields of the natural (life, physical), social (e.g., psychology, education), and the formal sciences, especially the bio- and statistical sciences where various hypothesis testing and multiple testing procedures (MTPs) have been developed over decades.

A hypothesis testing procedure is based on a test statistic, typically defined as a ratio of an effect size to its standard error \citep[][for example]{Cohen69.88,Hedges26}. Examples of effect size include a raw or standardized mean difference; correlation or regression coefficient; and (log-)odds ratio; etc.; each of which can be converted from one metric to another \citep{Borenstein09}. Under the given null hypothesis ($H_0$) tested (against the corresponding alternative hypothesis $H_1$, respectively), such a test statistic follows a central-zero-mean (non-central, resp.) $t$-distribution or normal distribution (e.g., asymptotically for large samples) under the null hypothesis tested, perhaps after transforming another test statistic that follows a central (non-central, resp.) $F$, $\chi_{\nu}^2$, or other standard distribution under $H_0$ ($H_1$, resp.) \citep[][]{Hoyle73}. A $p$-value is a standardized deterministic transformation of the given test statistic onto the $[0,1]$ interval \citep[][p.17]{Dickhaus14}, such that the smaller the $p$, the more decisively the given null hypothesis $H_0$ is rejected \citep[][p.31]{Efron10LSI}. The $p$-value ($p(x)$), obtained from the test statistic computed from the given sample data set $x\equiv x_n$ of size $n$, is declared as ``significant", i.e., rejecting the given null $H_0$, if $p\leq\alpha$, based on a small specified fixed Type I error probability ($\alpha$) of incorrectly rejecting $H_0$ over imaginary datasets $x_n^*$ randomly sampled from a population data-generating process where $H_0$ is true. Meanwhile, $p$-values provide a common ``bottom line" language for the communication of statistical results that are still very often communicated by scientific journals, despite concerns \citep{WassersteinLazar16}. The $p$-value is basic to statistical inference in that it can be used to compute the upper bound of the Bayes factor (\citep[][for example]{HeldOtt18, BenjaminBerger19}) and hence can provide approximate Bayesian computation.

The conditional statistical power ($1-\bar{\beta}$) of a null hypothesis testing procedure is the probability that it will detect an effect size ($\theta$) to reject the given null $H_0$, given specified $\theta$, $n$, $\alpha$, and the Type II error probability ($\bar{\beta}$) of incorrectly not rejecting $H_0$ over imaginary datasets $x_n^*$ randomly sampled from a population data-generating process where $H_1$ is true. The conditional  power of virtually any classical $\chi^2_{\nu}$-, $F$, $z$-, or $t$-test procedure can be calculated by a closed-form formula \citep[][for example]{Cohen69.88}, which can be easily rearranged to solve for any of the four quantities (power, $n$, $\theta$, and $\alpha$) given the other three. Power analysis is used to calculate the power ($1-\bar{\beta}$) of a test procedure, or the minimum data sample size needed to achieve some minimum desired power (e.g., $0.80$) for a planned future (e.g., replication or interim) study, given anticipated or estimated $\theta$ and $\alpha$; and less often, to evaluate the sensitivity of studies, or to make decisions about criteria of statistical significance \citep[][]{Murphy25}. However, conditional power analysis assumes a specified fixed effect size ($\theta$) without accounting for its uncertainty. This issue can be addressed {theoretically} by predictive power analysis \citep[][\S6.5]{SpiegelhalterFreedman86,SpiegelhalterEtAl86,SpiegelhalterEtAl04}, which averages the conditional power over a specified prior distribution representing available knowledge for the effect size {\citep[][for example]{MicheloudHeld22,PawelEtAl23,PawelEtAl24,DemartinoEtAl26}.}

Widely available software packages now easily enable common statistical analyses of data to routinely output results of typically-correlated $p$-values from multiple hypothesis tests (often, on many variables), either as the main data analysis goal, and/or as automatic by-products of other main statistical modeling objectives. Also, an MTP for marginal $p$-values can reduce the results of different (e.g., $t$, $F$, $\chi^2$, $z$, Wilcoxon, and/or log-rank, etc.) test statistics to a common interpretable $p$-value scale, while conveniently, not requiring the statistician to assume nor to specify and estimate an explicit model for the potentially-complex joint distributions of the test statistics, having typically unknown correlations. All things considered, it is no wonder why MTPs for marginal $p$-values are popular and relevant in applied statistics \citep[][and references therein]{TamhaneGou22}, which are easily usable and computable, while not requiring direct access to the original data, so that such an MTP can be readily used for meta-analysis. 

Given that there are very many MTPs, and that $p$-values in applied statistics are typically dependent (correlated), we henceforth focus on the subset of MTPs for marginal $p$-values, where each MTP provides either Family-Wise Error Rate (FWER) or False Discovery Rate (FDR) control under unknown arbitrary dependencies between $p$-values, even when the given observed $p$-values arise from a highly heterogeneous mix of different hypothesis testing procedures, and without requiring direct access to the original data. They include the un/weighted: \citet{Bonferroni36} MTP (B-MTP) and \cite{Holm79} MTP (H-MTP), each of which control the FWER; the \citet{BenjaminiYekutieli01} MTP (BY-MTP), which controls the FDR; and the Dirichlet process MTP (DP-MTP) \citep{Karabatsos25}, {defined by a DP prior distribution \citep{Ferguson73} which supports the entire space of MTPs controlling either the FWER or the FDR under arbitrary dependencies between $p$-values}. A weighted version of the B-MTP, H-MTP, BY-MTP, or DP-MTP is defined by importance weights assigned to respective $p$-values and null hypotheses \citep{BenjaminiHochberg97,KangEtAl09,TamhaneGou22,Karabatsos25}. {All these MTPs are reviewed in \S\ref{Section:ReviewMTP}.}

MTP power analysis methods have a shorter history and focus on conditional instead of predictive power. For any MTP for testing a given set of $m\geq1$ hypothesis with FWER or FDR control, power can be defined by any one of the following four ways, with respect to the $m$-variate distribution of their test statistics under the alternative hypotheses (null hypothesis, resp.) for all $m$ tests (resp.) with $m\times m$ correlation matrix. The \textit{marginal (individual) power}, of each test of hypotheses $H_{0,j}$ (for $j=1,\ldots,m$), is the marginal probability that the corresponding $p$-value leads to a rejection under level $\alpha$ by the MTP, with respect to the marginal distribution of the corresponding $j$th test statistic under the alternative hypotheses \citep{SennBretz07,Porter18}. The \textit{average power} is the un/weighted average over these $m$ marginal probabilities (resp.) under the alternative hypotheses (weights sum to $m$ over all $m$ hypotheses tested) \citep{BretzEtAl11,DudoitEtAl03,BenjaminiHochberg97,WestfallKrishen01}; which relates to the criterion of \textit{expected number of rejected hypotheses} \citep{Spjotvoll72}. The \textit{disjunctive ($d(=r)=1$/minimal/any-pair) power} is the probability that at least one false null hypotheses is rejected at level $\alpha$ by the given MTP \citep{BretzEtAl11,ChenEtAl11,MaurerMellein88,WestfallEtAl11,Ramsey78}. Finally, the \textit{conjunctive (complete/all-pairs) power}  is the probability that all false null hypotheses are rejected at level $\alpha$ by the MTP \citep{BretzEtAl11,Porter18,Ramsey78}.

For modern applied statistics which routinely outputs multiple $p$-values, MTP power analysis is more relevant than power analysis of a single test procedure. A power analysis can be used to calculate the given MTP's power, or the data sample size required to achieve some minimum desired power for the MTP for a planned future (e.g., replication or interim) study, and for the other mentioned uses of power analysis for a single test. For some FWER-controlling MTPs applied to up to three hypotheses, explicit formulas are available to compute (conditional) conjunctive power and disjunctive power  \citep{ChenEtAl11}. Simulation methods can be used to calculate the power(s) for any FWER- or FDR-controlling MTP \citep[][for example]{Porter18}. 

For any MTP that controls either the FWER or FDR for $m$ given hypothesis tests, optimal weights for respective $p$-values (hypothesis tests) can be determined through maximization of a defined power function(s) of interest {(e.g., disjunctive, conjunctive , or marginal power, etc.)} based on the specified marginal power for each test \citep[][and references therein]{RoederWasserman09,XiChen24}. In principle, an MTP that accounts for arbitrary dependence between $p$-values, while incorporating optimized weights for the respective $p$-values, provides an MTP analysis that relatively down-weights the impacts of any present significance-chasing biases (due to publication bias and/or $p$-hacking, etc.) while accounting for any heterogeneous effects, and without needing to assume independence among $p$-values (effect sizes) as done in traditional meta-analysis \citep{CooperHedgesValentine19}. This is because {the significance-chasing bias level} of any null hypothesis test's {binary-valued result (1 = significant, $p\leq\alpha$; or 0 = non-significant, $p>\alpha$) can be measured by the binary result's distance from the} test's marginal power probability, given the estimated or anticipated effect size of the test, as done by the $\chi^2$ test of excess significance \citep{IoannidisTrikalinos07}, while other tests and models of publication bias can have limited power and be difficult to use \citep{IoannidisTrikalinos07,Marks-AnglinChen20,ElliottEtAl26}. {As an aside, while the test of excess significance is based on a $\chi^2$ type distance, in principle any other distance measure from the general class of power divergence statistics \citep{CressieRead84} can be used, such as the Hellinger distance. While the $\chi^2$ test of excess significance aims to detect the presence of significance chasing biases instead of correcting for them, one key insight of this established bias testing procedure is that the marginal power of each of the $m$ hypothesis tests of a given multiple hypothesis testing application provide standards by which to address significance chasing biases. To elaborate, a significant result of a null hypothesis test with low marginal power is more likely to be a result of significance chasing biases, compared to a significant result of a null hypothesis test with high marginal power which is more likely to be a result of a true effect or signal and less likely to be a result of bias. It is less likely  because such a high powered test already makes it easier for the data analyst to detect a true effect or signal and thus lowers the need for the analyst to engage in significance chasing bias behavior. Therefore, assigning weights to $p$-values as increasing functions of their respective marginal powers would produce results of multiple hypothesis testing that would place greater relative weight to $p$-values that are less susceptible to significance chasing biases.}

For a given weighted multiple testing problem, optimal $p$-value weights can be found using any one of the available optimization algorithms that maximize any of the mentioned criteria of MTP power. {See the Appendix for a more detailed review \citep[based on][]{XiChen24}. However, while each of the available optimization methods are based on a clear objective, they are not fully satisfactory, as they are limited} to Bonferroni, graphical or chain MTPs; while numerically optimizing disjunctive or conjunctive power can be computationally costly when the number of hypothesis tests ($m$) is large; can yield non-unique $p$-value weights under equal (specified) marginal powers and correlated test statistics; can be difficult to determine or interpret (e.g., numerically optimizing for weighted or average power is easier, but involves introducing another set of weights for power that may be difficult to determine or interpret); and employs only conditional power, in that each method assumes that both the effect sizes and the correlation matrix of the $m$-variate distribution of the test statistics are fixed and known.

More ideally, a predictive MTP power analysis can be used, which would assign a prior distribution on both of these multivariate quantities in order to account for their uncertainty. Indeed, for specialized multiple testing problems employing $m$ hypothesis tests, the $m\times m$ correlation matrix of the test statistics can be directly derived or calculated, either from uncorrelated test statistics ($p$-values) or orthogonal designs or contrasts; permutation- or bootstrap-based estimates of test statistics; or for pairwise mean comparisons done via a Dunnett- or Tukey-Kramer type MTP {(see the Appendix for a more detailed review)}. However, more broadly speaking, in the modern data analysis era where MTP inferences need to be made from multiple $p$-values which often result from applications of highly heterogeneous combinations of different hypothesis testing procedures (e.g., not only for group mean comparisons), it can be very challenging to derive explicit equations or a sharp prior distribution about the correlations among corresponding test statistics, especially when the number ($m$) of hypothesis tests is large. Besides, as mentioned, this paper focuses on MTPs for marginal $p$-values, which each accounts for arbitrary dependencies between $p$-values while not needing to model or estimate these correlations directly from data.

Further, while each of the available MTP power analysis methods provides only conditional power analysis, i.e., based on fixed effect sizes and a fixed specified correlation matrix, such a method is uncongenial with any MTP (e.g., B-MTP, H-MTP, or DP-MTP) that controls either the FWER or FDR under arbitrary dependence (correlations) between the $p$-values (test statistics). This is because such an MTP accounts for all possible correlation matrices of the test statistics, not only one. And while it is tempting to view equation- and optimization-based methods for computing MTP power as more expediently convenient and precise compared to simulation methods, such an optimization method typically needs to assume a fixed correlation matrix in order to be computationally tractable. A simulation-based MTP power analysis can flexibly model more complex random correlation structures beyond a fixed correlation matrix.

Based on all the above considerations and the current state of the related MTP power analysis literature, we propose the first MTP predictive power analysis method, a straightforward simulation-based method of predictive power analysis for any MTP that provides FWER or FDR control under arbitrary-dependent $p$-values. This new MTP power analysis method, for any $m$ given hypothesis tests, is based on a joint prior distribution for the parameters of a newly developed $m$-variate (upper) non-central $t$ distribution, defined by a $m\times1$ location vector, $m\times m$ scale matrix (mainly determined by an underlying correlation matrix), and a vector of $m$ degrees of freedom parameters, for the distribution of the $m\ge1$ test statistics under their alternative hypotheses. The first part of this joint prior is defined by a prior distribution for the location vector representing the respective ratios of $m$ effect sizes to their respective standard errors, which can represent either: prior subjective clinical judgment information; be centered on the estimated effect sizes from a previous study; or more generally by a power prior, defined as the posterior distribution of the effect sizes from prior historical data \citep{IbrahimEtAl15} for, e.g., a future replication (or interim) study. The second part of the joint prior is defined by a uniform prior distribution for the underlying $m\times m$ correlation matrix for the $m$ test statistics under their respective alternative hypotheses, which can fully account for uncertainty and typical lack of prior information about this matrix, and ensure a congenial power analysis of the given MTP that controls FWER or FDR under arbitrarily dependent $p$-values (test statistics). Therefore, the new MTP power analysis method is based on a scale matrix mixture of asymmetric multivariate normal mean-variance mixtures of the test statistics under the alternative hypotheses of all the given multiple hypothesis tests (resp.). Indeed, this is the first paper that provides a method of MTP power analysis for arbitrary-correlated $p$-values, which can rapidly compute power results using an efficient algorithm \citep{PourahmadiWang15} for sampling the uniform distribution of correlation matrices.

{In Bayesian statistical inference, the prior distribution must reflect the current state of knowledge and prior information about model parameters available to the data analyst. When there is a lack of prior information about model parameters, then Bayesian theory implies that a coherent choice of prior distribution is an ``objective" non-informative prior distribution for the correlation matrix. Meanwhile, most priors used in statistical applications are objective instead of subjective, in part, because subjective elicitation is too difficult to be done even in a limited way for more than a few unknown parameters in the given statistical problem, so that such unknown parameters must be handled via objective Bayesian methods \citep[][p.216]{BergerEtAl24,BergerWolpert04}. Correspondingly, for the current context of predictive MTP power analysis, the ``objective" uniform prior distribution on the correlation matrices often honestly reflects the statistician's available prior knowledge in many MTP statistical situations. This is because of the fact that in such common situations, the MTP statistician only has access to the $p$-values of the $m$ hypothesis tests, without any access to the raw data used to compute these $p$-values, as in traditional meta-analysis. This thus leaves the statistician without the ability to estimate this dependence structure and to elicit a more informative prior distribution, thus rendering the ''objective" uniform prior over all valid correlation matrices as a reasonable choice prior in such a situation because it would honestly represent the available prior information. Indeed, such common MTP data analysis scenarios are in large part what make MTPs that control either the FWER or FDR under arbitrarily dependent $p$-values quite relevant and useful \citep[][for example]{BenjaminiYekutieli01}, including the B-MTP, H-MTP, BY-MTP, and DP-MTP. While it is arguable that the uniform prior may not meaningfully capture the more likely dependence structures among test statistics, such as equi-correlated or AR(1) correlation structures, more informative priors can be difficult to determine in more common MTP scenarios. The uniform prior is not “wrong” because it assigns non-zero probability to all arbitrary valid correlations structures for test statistics supported by the B-MTP, H-MTP, BY-MTP, and DP-MTP.}

This new MTP predictive power analysis method: (1) determines estimates of the marginal powers of the $m$ given hypothesis tests (resp.); which then can either be used for: (2) sample size determination for each of these tests given desired minimum marginal powers; (3)-(4) the other two mentioned purposes of power analysis of a single testing procedure; (5) constructing normalized weights of the corresponding $m$ $p$-values (resp.) while optimizing the power of the MTP, and down-weighting the impact of significance-chasing biases in the given MTP analysis; and/or (6) compare with observed $p$-values to provide an assessment of such biases for each individual $p$-value using Hellinger distance, in a similar spirit of the original seminal $\chi^2$ test of excess significance {(but here, not using a measure of bias over all $m$ hypothesis tests or ''studies" based on $\chi^2$ type distance)}. {It is well-known in the MTP field that a weighted MTP, which give relatively higher weights to the subset of tested null hypotheses that are likely to be false, tends to have higher statistical power compared to unweighted multiple testing procedures, for various correlation structures according to extensive simulation studies \citep[e.g,][]{Xie12}}. Further, the MTP predictive power analysis method achieves all these 6 objectives without needing to assume independence of the $m$ $p$-values, if the power-analyzed MTP controls for the FWER or FDR under arbitrarily dependent $p$-values.

Next, Section \S\ref{Section:ReviewMTP} reviews MTPs that each strongly controls FWER and/or FDR under arbitrarily dependent $p$-values, in order to contextualize the description of the new predictive MTP power analysis method for such an MTP in \S\ref{Section:MTPpower}. Then \S\ref{Section:Illustration} illustrates the new predictive MTP power analysis method to evaluate the power of the DP-MTP, and of the B-MTP and H-MTP for comparison purposes, through the MTP analysis of $p$-values arising from 41 hypothesis tests reported by a famous study on the neuropsychologic effects of unidentified childhood exposure to lead. \S\ref{Section:Theorem} proves a theorem on the convergence and convergence speed of the algorithm. \S\ref{Section:ConclusionsDiscussion} concludes by summarizing the main ideas of the paper.

{Central to the new predictive MTP power analysis method is Algorithm \ref{alg:MTPpower} presented in \S\ref{Section:MTPpower}, which can be used to simulate draws of samples from the prior predictive distribution of the joint prior mentioned above, in order to yield estimates of the various measures of powers of the $m$ test statistics, respectively, and the associated statistics mentioned above. This includes the corresponding $m$ marginal powers, being well-defined concepts in the MTP literature \citep{SennBretz07,Porter18} that can be used for estimating the Hellinger distance measures of significant-chasing bias, and for constructing $p$-value weights that can be used in a subsequent weighted MTP analysis. Meanwhile, the univariate marginal distributions of a prior predictive distribution are unique given a specifically defined prior and sampling distribution (likelihood), implying that the corresponding marginal powers and associated functions of them are uniquely defined, including the associated Hellinger distance bias measures and $p$-value weights, all while accounting for the uncertainty for the parameters underlying MTP power analysis by assigning them a joint prior distribution which averages conditional power over this prior. As mentioned earlier, this is unlike some of the previous approaches to MTP power analysis, all based on conditional power, which can yield non-unique $p$-value weights. While Algorithm \ref{alg:MTPpower} by default is based on the uniform prior on all valid correlation matrices for the $m$ given test statistics, the algorithm can be easily modified to implement more informative prior, such as priors supporting equi-correlated or AR(1) correlation structures.}

\section{Review of MTPs Valid Under Arbitrary Dependence}\label{Section:ReviewMTP} 

Here, a formal multiple hypothesis testing framework \citep{BlanchardRoquain08} is reviewed, useful for MTP inference \citep[][]{Karabatsos25}. Consider the probability space, $(\mathcal{X}, \mathfrak{X}, P)$, with probability function $P$ a member of a set  $\mathcal{P}$ of distributions representing a parametric or non-parametric model. {A null hypothesis, $H_0$, is a subset (submodel), $H_0\subset\mathcal{P}$, of distributions on $(\mathcal{X},\mathfrak{X})$, where $P\in H_0$ denotes that $P$ satisfies $H_0$, and $\mathcal{X}$ is the sample space of any observable dataset sampled from the distribution $P$, with $\mathfrak{X}$ the corresponding sigma algebra of events.} Any application of multiple testing, performed on a sample dataset $x\sim P\in\mathcal{P}$, aims to determine whether $P$ satisfies distinct null hypotheses, belonging to a certain set (family) $\mathcal{H}$ of candidate null hypotheses, with this set being typically countable, or even uncountable. 

For any given sampled dataset, $x\sim P\in\mathcal{P}$, a multiple testing procedure (MTP) is a decision function, $\mathcal{R}:x\in\mathcal{X}\mapsto\mathcal{R}(x)\subset\mathcal{H}$ that returns the subset of rejected null hypotheses; $\mathcal{R}^c(x)=\mathcal{H}-\mathcal{R}(x)$ is the subset of non-rejected hypotheses; and the MTP $\mathcal{R}(x)$ commits a Type I error if $H_0\in\mathcal{R}(x)\cap\mathcal{H}_{0}(P)$; and commits a Type II error if $H_0\notin\mathcal{R}(x)\cap\mathcal{H}_{1}(P)$, {where $\mathcal{H}_{1}(P)$ is the subset of the alternative hypotheses that are true under the given true data-generating distribution $P$}. In a typical multiple testing application, $\mathcal{H}$ is a finite or countable set of $m$ null hypotheses, $\mathcal{H}=\{H_{0,1},\ldots,H_{0,m}\}$, {tested respectively against given alternative hypotheses $\{H_{1,j}\}_{j=1}^m$,} with $m=|\mathcal{H}|\in\mathbb{Z^+}$ the number of candidate null hypotheses, $m_0(P) = |\mathcal{H}_{0}(P)|\leq m$ true null hypotheses, $m_1(P) = |\mathcal{H}_{1}(P)|=m-m_0\leq m$ (truly-)false null hypotheses, and $\pi_0=m_0/m$, under any $P\in\mathcal{P}$. For simplicity and with no loss of generality, it is henceforth assumed that $\mathcal{H}$ is countable, while the subsequently-described hypothesis testing related ideas can easily be extended to continuous hypothesis testing after adopting some more complex notation.

A typical MTP is a function $\mathcal{R}(\mathbf{p})$ of a family of $p$-values, $\mathbf{p}=(p_j,H_{0,j}\in\mathcal{H})$, where each $p_j$-value measures how probable are the observed data $x$, given that the null hypothesis $H_{0,j}$ is true \citep[][p.19, Definition 2.1]{Dickhaus14}. Assume that for each null hypothesis $H_{0,j}\in\mathcal{H}$ there exists a $p$-value function, $p_j:\mathcal{X}\rightarrow\lbrack0,1]$, having a marginally super-uniform probability distribution ($\mathbb{P}$) when $H_{0,j}$ is true, i.e.:\begin{equation}\label{eq:superUniform}
\mathbb{P}_{X\sim P}[p_j(X)\leq t]\leq t,\text{ for }\forall P\in
\mathcal{P},\text{ }\forall H_{0,j}\in\mathcal{H}_{0}(P),\text{ and }\forall t\in
\lbrack0,1],
\end{equation}
with strict equality ($\mathbb{P}_{X\sim P}[p_{H}(X)\leq t]=t$ for all $t\in\lbrack0,1]$) when the $p_j$-value is {calibrated}, i.e., uniform $\mathcal{U}[0,1]$ distributed under the null hypothesis $H_{0,j}$, which can be achieved (or improved on) using any of the available $p$-value calibration methods, e.g., for a test of a discrete model or a composite null hypothesis; or for model checking \citep[][and references therein]{DickhausEtAl12, Dickhaus13, Dickhaus14, Gosselin11, MoranEtAl24}. 

{Thus, the assumption (\ref{eq:superUniform}) for each $p$-value does not require a calibrated $p$-value for MTPs considered in the paper, since the $\mathcal{U}[0,1]$ distribution is a special case of the super-uniform distribution, even though a calibrated $p$-value would allow for a universal interpretation of the $p$-value, e.g., based on Fisher's scale of evidence for $p$-values \citep[][p.31]{Efron10LSI}. The $p$-values analyzed in the case study illustration in \S\ref{Section:Illustration} are calibrated asymptotically. Indeed, it is possible to define $p$-values only via the super-uniformity property (\ref{eq:superUniform}) \citep[][Definition 8.3.26]{CasellaBerger02}, which does not make specific requirements on the distribution of the underlying test statistic under the null hypothesis. Though,} the predictive MTP power analysis method introduced in  \S\ref{Section:MTPpower} requires the specification of an explicit functional relationship between each $p$-value and its test statistic, between the joint distribution of test statistics and corresponding $p$-values for any multiple testing problem applying $m$ hypothesis tests. Then, a $p$-value function can be written as $p(x)\equiv p\{t(x)\}$ from a test statistic $t\equiv t(x)$ of any random dataset $x$.

When using any MTP $\mathcal{R}$ to test a set of hypotheses, $\mathcal{H}$, a traditional criterion for Type I error control is the FWER, the probability ($\mathbb{P}$) of making at least one false discovery \citep{HochbergTamhane87}:
\begin{equation}\label{eq:FWER}
\mathrm{FWER}_P(\mathcal{R}) = \mathbb{P}_{X\sim P}[\text{Reject any true hypothesis, } H_{0,j}\in\mathcal{H}_{0}(P)\subseteq\mathcal{H}];
\end{equation}
while an alternative criterion is the FDR \citep{BenjaminiHochberg95}, the expected ($\mathbb{E}$) proportion of false rejections of null hypotheses out of the total number of rejected hypotheses:
\begin{equation}
    \mathrm{FDR}_P(\mathcal{R})=\mathbb{E}_{X\sim P}[\mathrm{FDP}_P(\mathcal{R}(X))]=\mathbb{E}_{X\sim P}\left[\frac{|\mathcal{R}(X)\cap\mathcal{H}_{0}(P)|}{\mathrm{max}\{|\mathcal{R}(X)|,1\}}\right].
\end{equation}
The practice of multiple hypothesis testing aims to maximize the expected number of rejections while controlling the FWER or FDR at a preset level $\alpha\in(0,1)$, typically $\alpha = 0.05$ or $0.01$, etc. An MTP strongly controls the FWER (\textit{FDR}, resp.) if $\mathrm{FWER}_P(\mathcal{R})\leq\alpha$ (if $\mathrm{FDR}_P(\mathcal{R})\leq\pi_0\alpha\leq\alpha$, resp.) for any $\alpha\in(0,1)$ and all $P\in\mathcal{P}$ (all $\pi_0\in[0,1]$, resp.), while $\mathrm{FDR}\leq\mathrm{FWER}$, with $\mathrm{FDR}=\mathrm{FWER}$ if all null hypotheses are true, and thus $\mathrm{FWER}\leq\alpha$ implies $\mathrm{FDR}\leq\alpha$, i.e., FDR control is more liberal \citep{TamhaneGou22}.

Many MTPs can each be characterized as a step-up MTP, $\mathcal{R}_{SU}^{\Delta_\alpha}$, which for the order statistics $p_{(1)}\leq\dots\leq p_{(m)}$ of $m$ $p$-values of the given multiple testing problem, specifies a non-decreasing sequence of thresholds, 
$0\leq\Delta_\alpha(H_{0,(1)})\leq\cdots\leq\Delta_\alpha(H_{0,(m)})\leq 1$ for the respective $m$ ordered tested null hypotheses, and then rejects the null hypotheses having the $R_\alpha(x)$ smallest $p$-values, with:
\begin{equation}\label{eq:R.Delta}
  \textit{R}_\alpha(x) = \underset{r\in\{0,1,\ldots,m\}}{\mathrm{max}}\{r : p_{(r)}(x) \leq \Delta_\alpha(H_{0,(r)})\}.
\end{equation}
where $p_{(0)}\equiv0$, and given a specified threshold function:
\begin{equation}\label{eq:Delta}
\Delta_{\alpha,\nu}(H_{0.(r)})\equiv\alpha\pi(H_{0,(r)})\beta_{\nu}(r),
\end{equation}
with weight function $\pi:\mathcal{H}\rightarrow[0,1]$ and $\beta_{\nu}:\mathbb{R}^+\rightarrow\mathbb{R}^+$ is a shape (reshaping) function. MTPs that strongly control FWER under arbitrary dependencies between $p$-values, include: the (unweighted) \citet{Bonferroni36} MTP (B-MTP), defined by thresholds $\Delta_{\alpha,\nu}(H_{0,(r)})\equiv\alpha\beta_{\nu}(r)/m=\alpha/m$ with shapes $\beta_{\nu}(r)\equiv1$ and weights $\pi(H_{0,(r)})\equiv w_{(r)}=1/m$; the \citet{Holm79} step-down MTP (H-MTP), defined by thresholds $\Delta_{\alpha,\nu}(H_{0,(r)})\equiv\frac{\alpha}{m-r+1}$. The weighted B-MTP is defined by thresholds $\Delta_{\alpha,\nu}(H_{0,(r)})=\alpha\pi(H_{0,(r)})\beta_{\nu}(r)=\alpha w_{(r)}$ using weights $\pi(H_j)\equiv w_{j}\in[0,1]$ assigned to respective null hypotheses $H_{0,j}\in\mathcal{H}$ such that $\sum_{i=1}^{m}w_j=1$, where $\pi(H_j)$ is an arbitrary probability distribution on $\mathcal{H}$ defining the relative importances of the $m$ null hypotheses tested \citep{TamhaneGou22}. The weighted H-MTP is defined by thresholds $\Delta_{\alpha,\nu}(H_{0,(r)})\equiv\frac{w_{(r)}}{\sum_{k=r}^mw_{(k)}}\alpha$ with weights $w_j\ge0$, $j=1,\ldots,m$.

For any probability measure $\nu$ on $(0,\infty)$, the step-up procedure (\ref{eq:R.Delta})-(\ref{eq:Delta}) based on the shape function:
\begin{equation}\label{eq:beta}
\beta_{\nu}(r)\equiv\int_0^{r} x \text{d}\nu(x)
\end{equation}
strongly controls $\mathrm{FDR}\leq\alpha\pi_0\leq\alpha$ under arbitrary dependencies between $p$-values, where $\pi:\mathcal{H}\rightarrow\lbrack0,1]$ is a probability mass function with respect to a counting measure $\Lambda$ on $\mathcal{H}$ and $\pi_0=\sum_{H_{0,j}\in\mathcal{H}_0}\Lambda(\{H_{0,j}\})\pi(H_{0,j})$. Using different weights $\pi(H_{0,j})$ (weights $\Lambda(\{H_{0,j}\})$, resp.) over countable $\mathcal{H}$ gives rise to \textit{weighted $p$-values} (\textit{weighted FDR}; \citet{BenjaminiHochberg97}, resp.). For any finite set of $m$ null hypotheses $\mathcal{H}$ and $p$-values, the \citet[BY;][]{BenjaminiYekutieli01} distribution-free step-up MTP (i.e., BY-MTP) is defined by probability measure $\nu(\{k\}) = (k\sum_{j=1}^m\tfrac{1}{j})^{-1}$ with support in $\{1,\ldots,m\}$ and threshold function $\Delta_{\alpha,\nu}(H_{0,(r)},r)=\alpha\beta_{\nu}(r)/m$, based on linear shape function $\beta_{\nu}(r)=\sum_{k=1}^r k\nu(\{k\})=r/ (\sum_{j=1}^m \tfrac{1}{j})$ and $p$-value weights $\pi(H_{0,j})\equiv w_{j}=1/m$ for $j=1,\ldots,m$ \citep[][Lemma 3.2, p.976]{BlanchardRoquain08}. The weighted BY-MTP is instead defined by an arbitrary probability distribution $\pi(H_{0,j})$ on $\mathcal{H}$.

Each of the above MTPs provide conservatively control the FWER or FDR, especially when the number of hypothesis tests ($m$) is large, while other choices of $\nu$ can sometimes improve the power of the corresponding MTP \citep[][\S4.2]{BlanchardRoquain08}. With this in mind, the DP-MTP method was developed  \citep{Karabatsos25}, this MTP defined by a Bayesian nonparametric, Dirichlet process (DP) \citep{Ferguson73} prior distribution that supports the entire space of random probability measures (r.p.m.s), $\nu$, which inn turn, for an observed set of $p$ values $\{p_j\}_{j=1}^m$, induces a DP prior distribution of random MTP thresholds $\Delta_{\alpha,\nu}(H_{0,(r)})\equiv\alpha\pi(H_{0,(r)})\beta_{\nu}(r)$ via (\ref{eq:R.Delta})-(\ref{eq:beta}), where each random threshold $\Delta_{\alpha,\nu}$ (and DP random $\nu$), and thus the DP-MTP, strongly controls either the FWER or FDR under arbitrary dependence among $p$-values. {Thus, the DP-MTP procedure naturally accounts for the uncertainty in the selection of MTPs and their respective decisions regarding which number of the smallest $p$-values are significant discoveries, from any set of hypotheses tested. Further, DP-MTP also measures each $p$-value's probability of significance relative to the DP prior predictive distribution of this space of all MTPs.}

Specifically, for any probability space $(\mathcal{X},\mathcal{A},G)$, an r.p.m. $\nu$ follows a Dirichlet process (DP) prior with baseline probability measure $\nu_0$ and mass parameter $M$, denoted $\nu\sim\mathrm{DP}(M\nu_0)$, if 
\begin{equation}\label{eq:DP}
(\nu(B_1),\ldots,\nu(B_m))\sim\mathrm{Dirichlet}_m(M\nu_0(B_1),\ldots,M\nu_0(B_m))
\end{equation}
for any (pairwise-disjoint) partition $B_1,\ldots,B_m$ of the sample space $\mathcal{X}$, with expectation $\mathbb{E}[\nu(\cdot)]=\nu_0(\cdot)$, variance $\mathbb{V}[\nu(\cdot)]=\frac{\nu_0(\cdot)[1-\nu_0(\cdot)}{M+1}$, and almost-sure support of the space of discrete r.p.m.s $\nu$ \citep{Ferguson73}. The DP-MTP is based on the DP prior (\ref{eq:DP}) centered on baseline measure $\nu_0(r-1,r]\equiv (r\sum_{j=1}^m\tfrac{1}{j})^{-1}$ (for $r=0,1,\ldots,m$) chosen to match in prior expectation (given $M$) the probability measure $\nu$ defining the BY MTP; and based on an exponential  hyper-prior distribution of the DP mass parameter $M$ in (\ref{eq:DP}), given by:
\begin{equation}\label{eq:DPhyperprior}
M\sim\mathrm{Exponential}(\mu),
\end{equation}
and a default hyper-prior is chosen by $\mu\equiv1$. \citet{Karabatsos25} provides further discussions of the choices and characterizations of the DP prior for DP-MTP.

With respect to the joint hierarchical DP prior distribution (\ref{eq:DP})-(\ref{eq:DPhyperprior}) with BY-MTP baseline $\nu_0$, the DP-MTP, for each $p$-value from a given set of $m$ ordered $p$-values, $\{p_{(r)}\}_{r=1}^m$, counts the proportion of times the $p$-value is significant, represented by the following vector (denoted $[\cdot]$) of prior predictive probabilities:
\begin{subequations} \label{eq:I.DP-MTP}
\begin{eqnarray}
&&[\Pr\{\mathbf{1}(r\leq\textit{R}_{\alpha,\nu}(x); p_{(r)})=1\}: \text{ }r=0,1,\ldots,m]\\
&&=\int\dots\int[\mathbf{1}(r\leq\textit{R}_{\alpha,\nu}(x); p_{(r)}):r=0,1,\ldots,m]\mathcal{D}_{m}(\mathrm{d}\nu_1,\ldots,\mathrm{d}\nu_m\mid M\nu_{0}),
\end{eqnarray}
\end{subequations}
where $\bm{1}(\cdot)$ is the (1 or 0 valued) indicator function, $\mathcal{D}_{m}(\nu_1,\ldots,\nu_m\mid M\nu_{0})$ (with $\nu_r\equiv B_j=\nu(r-1,r]$ for $r=1,\ldots,m$) denotes the cumulative distribution function (CDF) of the Dirichlet distribution (\ref{eq:DP}), and:
\begin{subequations} \label{eq:R.DP-MTP}
\begin{eqnarray}
R_{\alpha,\nu}(x) &=& \underset{r\in\{0,1,\ldots,m\}}{\mathrm{max}}\{r : p_{(r)}(x) \leq \Delta_{\alpha,\nu}(H_{0,(r)})\}\\
 &=& \underset{r\in\{0,1,\ldots,m\}}{\mathrm{max}}\left\{r : p_{(r)}(x) \leq \alpha\pi(H_{0,(r)})\beta_{\nu}(r)\right\}\\
 &=& \underset{r\in\{0,1,\ldots,m\}}{\mathrm{max}}\left\{r : p_{(r)}(x) \leq \alpha\pi(H_{0,(r)})\sum_{j=1}^{r}j\nu (j-1,j]\right\},
\end{eqnarray}
\end{subequations}
with $p_{(0)}\equiv0$, while inducing a prior predictive distribution of the number $R_{\alpha,\nu}$ in (\ref{eq:R.DP-MTP}) of the smallest $p$-values from $\{p_{(r)}\}_{r=1}^m$ that represent significant discoveries. Also, (\ref{eq:R.DP-MTP}) is defined by weights $\pi(H_{0,(r)})$ and corresponding test levels $\alpha_{(r)}=\alpha\pi(H_{0,(r)})$ given the overall prespecified level $\alpha$ (e.g., $\alpha=0.05$ or $0.01$, etc.) for the respective $m$ ordered $p$-values $\{p_{(r)}\}_{r=1}^m$ satisfying $\sum_{r=1}^m\pi(H_{0,(r)})\equiv1$, such that for $r=1,\ldots,m$, equal weights $\pi(H_{0,(r)})=1/m$ defines the unweighted DP-MTP \citep{Karabatsos25}, while a weighted DP-MTP is defined by an arbitrary probability distribution on $\mathcal{H}$ (as in the weighted B-MTP).

The joint hierarchical DP prior distribution (\ref{eq:DP})-(\ref{eq:DPhyperprior}) of the DP-MTP induces a prior predictive distribution for the shape parameter $\beta_{\nu}$ and corresponding threshold parameter $\Delta_{\alpha,\nu}$ and number of discoveries $R_{\alpha}$ from (\ref{eq:R.Delta}), thereby treating these functions as random instead of fixed as done by the standard MTPs, while accounting for uncertainty in the selection of MTPs and their respective cut-off points and decisions regarding which of the smallest $p$-values are significant discoveries from a given set $\mathcal{H}$ of null hypotheses tested. The DP-MTP method thus emphasizes prior predictive hypothesis testing \citep{Box80} and multiple bias modeling \citep{Greenland05} while extending Vibrations of Effects analysis \citep{VinatierEtAl25} to provide multiple hypothesis testing with uncertainty quantification. See \citet{Karabatsos25} for further discussion. 

The DP-MTP method can be run by \texttt{R} package \texttt{bnpMTP} \citep{Karabatsos25bnpMTP} using the code line \texttt{bnpMTP($\cdot$)}, which, given $m$ input $p$-values (and perhaps optional inputs of $\alpha$, $p$-value weights, $N$, and $\mu$), uses standard methods to generate $N$ Monte Carlo samples of r.p.m.s $\{\nu_g\}_{g=1}^m$ from the joint prior distribution (\ref{eq:DP})-(\ref{eq:DPhyperprior}) (with BY-MTP $\nu_0$) and of  corresponding $N$ samples from the prior predictive distribution of (\ref{eq:I.DP-MTP})-(\ref{eq:R.DP-MTP}). In principle, DP-MTP can be extended to online multiple hypothesis testing \citep[][and references therein]{RobertsonEtAl23} of a potentially-infinite number of hypothesis tests, by relaxing the constraints of the weights $\pi(H_{0,(r)})$ of the respective $m$ ordered $p$ values $\{p_{(r)}\}_{r=1}^{m(t)}$ obtained at at any given time point $t$ to satisfy $\sum_{r=1}^m\pi(H_{0,(r)})<1$ (with total level $\sum_{r=1}^m\alpha\pi(H_{0,(r)})<\alpha$) before all tests are performed, and to satisfy $\sum_{r=1}^m\pi(H_{0,(r)})\equiv1$ (with intended total level $\sum_{r=1}^m\alpha\pi(H_{0,(r)})=\alpha$) after all tests are done (the same method can be used to define an online version of any other MTP mentioned in this section). This defines the only online MTP that controls FWER and/or FDR under arbitrarily-dependent $p$-values with uncertainty quantification in multiple testing. Further, it is straightforward to extend the DP-MTP method to handle tests of continuous hypotheses, either by modeling $\nu$ as a discrete r.p.m. (e.g., assigned a DP prior), or by assigning $\nu$ a more general Bayesian nonparametric prior distribution which supports the space of continuous random probability measures, such as a DP mixture of continuous densities \citep{Lo84}.

\section{Predictive MTP power analysis Under Arbitrary Dependence}\label{Section:MTPpower} 

Consider any (off/online, un/weighted, and discrete or continuous) multiple hypothesis testing scenario employing $m\geq1$ hypothesis tests using a preset total Type I error rate control $\alpha$, and corresponding random vector of $m\geq1$ test statistics $\bm{T}=(T_1,\ldots,T_m)^\top$ with possible realizations $\bm{T}=\bm{t}=(t_1,\ldots,t_m)^\top\in\mathbb{R}^m$. Each test statistic $T_j\in\bm{T}$ typically has (or can be specified to have) the general form $T_j\equiv\theta_j/\sigma_{\theta_j}$, with $\theta$ an effect size parameter of interest (e.g., \S1). Any realized value $t_j\equiv t_{n_j}=\widehat{\theta}_j/\sigma_{\widehat{\theta}_j}$ of $T_j$ is determined by some estimate $\widehat{\theta}_j$ of $\theta$ with (estimated) standard error $\widehat{\sigma}_{\widehat{\theta}_j}$ based on a dataset of size $n_j$. 

Further, assume that each of the $m$ test procedures tests a null hypothesis $H_{0,j}\in\mathcal{H}$ against an alternative hypothesis $H_{1,j}$ using a test statistic $T_j\in\bm{T}$ which under $H_{0,j}$ follows a standard central location($\mu$)-scale($\sigma$) Student $t$-distribution, $T_j\sim\mathcal{T}(\mu\equiv 0,\sigma\equiv 1,\nu_j)$ with degrees of freedom $\nu_{j}>0$ an increasing function of $n_j$; and under $H_{1,j}$ follows a non-central $t$-distribution, $T_j\sim\mathcal{NT}(\mu_j,\sigma\equiv 1,\nu_j)$, with non-zero non-centrality parameter $\mu_j\in\mathbb{R}/\{0\}$; perhaps after transforming the original test statistic to have a normal-like distribution under $H_{0,j}$ and under $H_{1j}$. The non-central $\mathcal{NT}(\mu_j,1,\nu)$ distribution (and asymptotic normal distribution $\mathcal{N}(\mu_j,1)$ distribution, resp.) is thus the basis for power analysis for a $t$-test ($z$-test, resp.) of a mean difference, correlation or regression coefficient, (log-)odds ratio; etc., as mentioned in \S1 and by statistics textbooks. Further examples using $z$-tests and $t$-tests are illustrated in \S\ref{Section:Illustration}. 

Recall that if $\mu\neq0$, then the $\mathcal{NT}(\mu,1,\nu)$ distribution is unimodal and asymmetric; {and if $\mu=0$,} this distribution is symmetric and coincides with the $\mathcal{T}{}(0,1,\nu)$ distribution. For each test procedure $j=1,\ldots,m$, we have that: $n_j\rightarrow\infty$ implies $\nu_{j}\overset{d}{\rightarrow}\infty$ and asymptotic convergences to normal distributions, $\mathcal{T}(0,1,\nu)\overset{d}{\rightarrow}\mathcal{N}(0,1)$ under $H_{0,j}$, and $\mathcal{NT}(\mu_j,1,\nu)\overset{d}{\rightarrow}\mathcal{N}(\mu_j,1)$ under $H_{1,j}$, while the test procedure may alternatively employ these asymptotic normal distributions; and the two-tailed $p$-value is given by $p_j=2F_{\nu_j}(-|t_j|)$ with asymptotic $p$-value $p_j=2\Phi(-|t_j|)$); and alternatively, a one-sided lower-tailed (upper-tailed, resp.) $p$-value is given by $p_j=F_{\nu_j}(t)$ with asymptotic $p$-value $p_j=\Phi(t_j)$ (or by $p_j=1-F_{\nu_j}(t_j)$ with asymptotic $p$-value $p_j=1-\Phi(t_j)$, resp.), where $F_{\nu}$ is the cumulative distribution function (CDF) of the $\mathcal{T}(0,1,\nu)$ distribution and $\Phi$ is the normal $\mathcal{N}(0,1)$ distribution CDF.

Consider $m(\geq1)$ variate generalizations of the normal, $t$, and non-central $t$ distributions. Let $\mathcal{N}_m(\bm{\mu},\bm{\Sigma})$ be the $m$-variate normal distribution of a random vector $\bm{Y}=(Y_1,\ldots,Y_m)^\top$, defined by parameters of mean vector $\bm{\mu}=(\mu_1,\ldots,\mu_m)^\top$ and $m\times m$ symmetric positive-definite (s.p.d.) covariance matrix, $\bm{\Sigma}=(\sigma_{j,k})_{m\times m}$, with diagonal elements $(\sigma^2_{1,1}\equiv\sigma^2_1,\ldots,\sigma^2_{m,m}\equiv\sigma^2_m)^\top$ the respective variances of $(T_1,\ldots,T_m)^\top$, and each off-diagonal element $\sigma_{j,k}$ is the covariance $\mathbb{C}(T_j,T_k)$ for each distinct $j,k\in\{1,\ldots,m\}$. It can be convenient to model the (co)variance matrix $\bm{\Sigma}$ through either of the three following separation strategies: $\bm{\Sigma}\equiv\mathrm{diag}(\mathrm{\bm{\sigma}})\bm{P}\mathrm{diag}(\mathrm{\bm{\sigma}})$, $\bm{\Sigma}\equiv\sigma^2\bm{P}$, or $\bm{\Sigma}\equiv\bm{P}$; where $\mathrm{diag}(\bm{\sigma})$ is the $m\times m$ diagonal matrix of standard deviations $\mathrm{\bm{\sigma}}=(\sigma_{1,1},\ldots,\sigma_{m,m})^\top$, $\sigma^2$ is a common variance, and $\bm{P}=(\rho_{j,k})_{m\times m}$ is the $m\times m$ correlation matrix (i.e., a s.p.d. matrix with $\rho_{j,j}=1$ and $\rho_{j,k}\in[-1,1]$) for all distinct $j,k\in\{1,\ldots,m\}$). A sample random vector $\bm{Y}\sim\mathcal{N}_m(\bm{\mu},\bm{\Sigma})$ can be generated by $\bm{Y}=\bm{\mu} + A\bm{Z}$, with $\bm{Z}=(Z_1,\ldots,Z_m)^\top$ and $\{Z_j\}_{j=1}^m\overset{\text{iid}}{\sim}\mathcal{N}(0,1)$, using the Cholesky decomposition $\bm{\Sigma}=AA^\top$ of the s.p.d. matrix $\bm{\Sigma}$, where $A=(A_{j,k})_{m\times m}$ is the lower-triangular matrix with $A_{jj}>0$ for $j=1,\ldots,m$.

Let $\mathcal{T}_m(\bm{\mu},\bm{\Sigma},\bm{\nu})$ be the $m$-variate Student's $t$-distribution of random $\bm{T}=(T_1,\ldots,T_m)^\top$, a symmetric distribution defined by parameters of location (mean) vector $\bm{\mu}=(\mu_1,\ldots,\mu_m)^\top$, scale matrix $\bm{\Sigma}=(\sigma_{j,k})_{m\times m}$, and degrees of freedoms $\bm{\nu}=(\nu_1,\ldots,\nu_m)^\top$; with corresponding pairwise covariances $\mathbb{C}(T_j,T_k)=\tfrac{\sigma_{j,k}\sqrt{\nu_j\nu_k}}{2}\tfrac{\Gamma\{(\nu_j-1)/2\}}{\Gamma(\nu_j/2)}\tfrac{\Gamma\{(\nu_k-1)/2\}}{\Gamma(\nu_k/2)}$ (if $\nu_j>1,\nu_k>1$) and correlations $\mathbb{P}(T_j,T_k)=\tfrac{\rho_{j,k}}{2}\tfrac{\Gamma\{(\nu_j-1)/2\}\sqrt{\nu_j-2}}{\Gamma(\nu_j/2)}\tfrac{\Gamma\{(\nu_k-1)/2\}\sqrt{\nu_k-2}}{\Gamma(\nu_k/2)}$ (if $\nu_j>2,\nu_k>2$; and $\rho_{j,k}\equiv\tfrac{\sigma_{j,k}}{\sqrt{\sigma_{j.j}\sigma_{k,k}}}$) for all distinct $j,k\in\{1,\ldots,m\}$, where $\Gamma(\cdot)$ is the gamma function. If $\bm{\Sigma}$ is a diagonal matrix, then the univariate marginal distributions of $\mathcal{T}_m(\bm{\mu},\bm{\Sigma},\bm{\nu})$ are independent symmetric Student $t$-distributions, $\mathcal{T}(\mu_j,\sigma_{j,j},\nu_j)$ for $j=1,\ldots,m$ \citep[][Corollary 3]{Ogasawara24}. A sample random vector $\bm{T}\sim\mathcal{T}_m(\bm{\mu},\bm{\Sigma},\bm{\nu})$ can be generated by:
\begin{equation}\label{eq:MTdf}
\bm{T}=\bm{\mu} + A\bm{Z}\oslash(\sqrt{V_1/\nu_1},\ldots,\sqrt{V_m/\nu_m})^\top,\end{equation}
with $\bm{Z}=(Z_1,\ldots,Z_m)^\top$, $Z_j\sim\mathcal{N}(0,1)$ and $V_j\sim\chi^2_{\nu_j}$ for $j=1,\ldots,m$ \citep[][Def.1]{Ogasawara24}, where $\oslash$ denotes the Hadamard (element-wise) division of vectors. A special case of $\mathcal{T}_m(\bm{\mu},\bm{\Sigma},\bm{\nu})$ is the $\mathcal{T}_m(\bm{\mu},\bm{\Sigma},\nu)$ distribution \citep[][eq.1.1]{Cornish54, KotzNadarajah04} defined by stochastic representation $\bm{T}=\bm{\mu} + A\bm{Z}/\sqrt{V/\nu}$, $\bm{Z}=(Z_1,\ldots,Z_m)^\top$, 
$\{Z_j\}_{j=1}^m\overset{\text{iid}}{\sim}\mathcal{N}(0,1)$, $V\sim\chi^2_{\nu}$, $\nu>0$, and non-independent, $m$ univariate marginal $\mathcal{T}(\mu_j,\sigma_{j},\nu)$ distributions, $j=1,\ldots,m$.

Now, introduce the $m$-variate upper non-central $t$-distribution, $\bm{T}\sim\mathcal{NT}_m(\bm{\mu},\bm{\Sigma},\bm{\nu})$, defined by parameters of non-centrality vector $\bm{\mu}\in\mathbb{R}^m$, scale matrix $\bm{\Sigma}$, and degrees of freedoms, $\bm{\nu}=(\nu_1,\ldots,\nu_m)^\top$. This $m$-variate distribution is a member of the class of normal mean-variance mixtures \citep[][\S3.2.2]{McNeilEtAl05}, and defined by the stochastic representation:
\begin{equation}\label{eq:MNTdf}
\bm{T} = (\bm{\mu} + A\bm{Z}) \oslash(\sqrt{V_1/\nu_1},\ldots,\sqrt{V_m/\nu_m})^\top,
\end{equation}
with $\bm{Z}=(Z_1,\ldots,Z_m)^\top$, $Z_j\sim\mathcal{N}(0,1)$ and $V_j\sim\chi^2_{\nu_j}$ for $j=1,\ldots,m$. The $\mathcal{NT}_m(\bm{\mu}$, $\bm{\Sigma},\bm{\nu})$ distribution is asymmetric when $\bm{\mu}\neq\bm{0}$, and coincides with the symmetric $\mathcal{T}_m(\bm{\mu},\bm{\Sigma},\bm{\nu})$ distribution when $\bm{\mu}=\bm{0}$. A special case of $\mathcal{NT}_m(\bm{\mu},\bm{\Sigma},\bm{\nu})$ is the $m$-variate upper non-central $t$-distribution, $\mathcal{NT}_m(\bm{\mu},\bm{\Sigma},\nu)$, defined by a scalar degrees of freedom $\bm{\nu}\equiv\nu$ and univariate marginal non-central $t$-distributions $\mathcal{NT}(\mu_j,\nu)$ for $j=1,\ldots,m$ \citep[][p.81; \S3.1; eq.5.1; resp.]{Kshirsagar61,Juritz73, KotzNadarajah04}, and by a complicated $m$-variate probability density function (pdf) even if $\Sigma\equiv\sigma^2\bm{P}$. Therefore, most applications of this distribution $\bm{T}\sim\mathcal{NT}_m(\bm{\mu},\bm{\Sigma},\nu)$ employ its stochastic representation, given by $\bm{T} = (\tfrac{V}{\nu})^{-1/2}(\bm{\mu} + A\bm{Z})$, where $\bm{Z}=(Z_1,\ldots,Z_m)^\top$ with $\{Z_j\}_{j=1}^m\overset{\text{iid}}{\sim}\mathcal{N}(0,1)$ and $V\sim\chi^2_{\nu}$ \citep[][\S3.1; p.133, eq.8; resp.]{Juritz73,Hofert13}, e.g., this distribution can be sampled using the \texttt{mvtnorm} \texttt{R} package code line \texttt{rmvt(..., type = "Kshirsagar")} \citep{GenzBretz09}. A straightforward modification of this code to divide a $m$-variate normal random $(\bm{\mu} + A\bm{Z})$ by a random $(\sqrt{V_1/\nu_1},\ldots,\sqrt{V_m/\nu_m})^\top$ instead of $V/\nu$, enables sampling from the $\mathcal{NT}_m(\bm{\mu},\bm{\Sigma},\bm{\nu})$ distribution with vector $\bm{\nu}$  via its stochastic representation (\ref{eq:MNTdf}).  

Let $\widehat{\bm{\theta}}=(\widehat{\theta}_1,\ldots,\widehat{\theta}_m)^\top$ be the vector of anticipated or estimated effect sizes for a planned future study that will conduct tests of a specific set of $m$ null hypotheses. In particular, $\widehat{\bm{\theta}}$ may represent the data analyst's anticipated effect sizes, or may be the estimates of effect sizes obtained from a prior study (used for a future replication or interim study) of the same $m$ hypothesis tests. Also, let $\bm{\sigma}_{\widehat{\bm{\theta}}}=(\sigma_{\widehat{\theta}_1},\ldots,\sigma_{\widehat{\theta}_m})^\top$ be the standard errors of these $m$ effect sizes $\widehat{\bm{\theta}}$, which are {respectively decreasing functions} of the (anticipated or actual) sample sizes $(n_1,\ldots,n_m)^\top$, and are the square roots of the diagonal elements of the (e.g., inverse Fisher information) variance-covariance matrix $\Sigma_{\widehat{\theta}}$ of $\widehat{\bm{\theta}}$. In more common applications of power analysis, $\bm{\sigma}_{\widehat{\bm{\theta}}}$ is known instead of the full matrix $\Sigma_{\widehat{\theta}}$, though, covariances can sometimes be derived for certain effect size measures \citep{GleserOlkin09}. Specifying such covariances may not be crucial as this paper emphasizes power analysis of MTPs controlling FWER or FDR under arbitrarily dependent $p$-values (test statistics and effect sizes).

A conditional MTP power analysis $m$ null hypotheses tests is based on test statistics distributed as $\bm{T}\sim\mathcal{T}_m(\bm{0},\bm{P},\bm{\nu})$ under all null hypotheses $\{H_{0,j}\}_{j=1}^m$ jointly, and distributed as $\bm{T}\sim\mathcal{NT}_m(\bm{\mu},\bm{P},\bm{\nu})$ under all alternative hypotheses $\{H_{1,j}\}_{j=1}^m$ jointly, given a specified fixed: location vector $\bm{\mu}=|\widehat{\bm{\theta}}\oslash\bm{\sigma}_{\widehat{\bm{\theta}}}|$; $m\times m$ correlation matrix $\bm{P}$ of the underlying $m$-variate normal $\mathcal{N}_m(\bm{\mu},\bm{P})$ distribution; and degrees of parameters $\bm{\nu}=(\nu_1,\ldots,\nu_m)^\top$ (where $\nu_j\rightarrow\infty$ based on sample size $n_j\rightarrow\infty$ for any test based on a normal distribution approximation of its test statistic $T_j$ under the null $H_{0,j}$). 
% (\citet{Porter18}, pp.277-278, who instead used a symmetric $\mathcal{T}_m(\bm{\mu},\bm{P},\nu)$ under $\{H_{1,j}\}_{j=1}^m$).
For $m=1$ hypothesis test, $\mathcal{NT}_m(\bm{\mu},\bm{P},\bm{\nu})$ becomes the univariate non-central $\mathcal{NT}(\mu,1,\nu)$ distribution ($\mathcal{N}(\mu,1)$ distribution, resp.) used by classical conditional power analysis of a $t$-test ($z$-test if $\nu\rightarrow\infty$, resp.) \citep{Cohen69.88}. However, conditional MTP power analysis relies on a prespecified fixed parameters $(\bm{\mu},\bm{P},\bm{\nu})$ which can be difficult to specify when the number of tests $m$ is sufficiently large, and does not account for their uncertainty; and cannot provide a congenial power analysis of an MTP that controls the FWER or FDR under arbitrary dependence between $p$-values, because such an MTP accounts for all possible $m\times m$ s.p.d. correlation matrices of test statistics underlying these $m$ $p$-values, instead of only one fixed correlation matrix.

These issues can be addressed by an predictive power analysis of such an MTP, based on a joint prior distribution for the parameters $(\bm{\mu},\bm{P})$ driving the $m$-variate distribution of test statistics, $\bm{T}\sim\mathcal{NT}_m(\bm{\mu},\bm{P},\bm{\nu})$, and corresponding $p$ values under all alternative hypotheses $\{H_{1,j}\}_{j=1}^m$. Such a prior accounts for uncertainty in these parameters, and leads to the calculation of marginal powers of $m$ hypothesis tests, thereby addressing all six aims of MTP power analysis (mentioned at the end of \S1) without needing to assume independent $p$-values. The joint prior distribution (probability density function) for $(\bm{\mu},\bm{P})$ is specified by:
\begin{equation}\label{eq:prior}
\pi(\bm{\mu},\bm{P})=\mathcal{N}_m(\bm{\mu}\mid\widehat{\bm{\theta}}\oslash\bm{\sigma}_{\widehat{\bm{\theta}}},\bm{I}_m)\mathrm{LKJ}(\bm{P}\mid\eta\equiv1),
\end{equation}
based on an $m$-variate normal distribution for the mean parameters $\bm{\mu}$, with prior mean vector defined by anticipated or estimated effect sizes and their respective standard errors, and with (co)variances given by the $m\times m$ identity matrix ($\bm{I}_m$); and a uniform prior distribution for the $m\times m$ (s.p.d.) correlation matrix parameters $\bm{P}$ (i.e., the LKJ distribution \citep{LewandowskiEtAl09} with parameter $\eta\equiv 1$), which ensures a congenial (predictive) power analysis for such an MTP that controls the FWER or FDR under arbitrary dependent $p$-values, because this uniform prior supports all possible $m\times m$ correlation matrices of the test statistics underlying these $p$-values. In contrast, a conditional MTP power analysis assigns a point-mass prior at a chosen point $(\widehat{\bm{\mu}}=\widehat{\bm{\theta}}\oslash\bm{\sigma}_{\widehat{\bm{\theta}}},\bm{P})$ with a fixed correlation matrix $\bm{P}$.

The default, $m$-variate normal prior $\mathcal{N}_m(\bm{\mu}\mid\widehat{\bm{\theta}}\oslash\bm{\sigma}_{\widehat{\bm{\theta}}},\bm{I}_m)$ in (\ref{eq:prior}) is equivalent to the prior for $\bm{\mu}\equiv\bm{\theta}\oslash\widehat{\bm{\sigma}}_{\bm{\theta}}$ based on a prior $\bm{\theta}\sim\mathcal{N}_m(\widehat{\bm{\theta}},\mathrm{diag}(\bm{\sigma}^2_{\widehat{\bm{\theta}}}))$ on the $m$ effect sizes $\bm{\theta}$, previously considered for $m=1$ effect size \citep{MicheloudHeld22}. The $\mathcal{N}_m(\bm{\mu}\mid\widehat{\bm{\theta}}\oslash\bm{\sigma}_{\widehat{\bm{\theta}}},\bm{I}_m)$ prior is convenient because it allows for prior mean specification without needing explicit knowledge of standard errors possibly unavailable from a given published study, as illustrated in the case study in \S\ref{Section:Illustration}. 

That said, publication bias and the winner’s curse often lead to overestimated original effect estimates \citep[][for example]{Ioannidis08,ButtonEtAl13,AndersonMaxwell17}, implying that for a power analysis of a replication study, the $\mathcal{N}_m(\bm{\mu}\mid\widehat{\bm{\theta}}\oslash\bm{\sigma}_{\widehat{\bm{\theta}}},\bm{I}_m)$ prior may be over-optimistic and lead to underpowered replication studies. A simple correction for this over-optimism is to multiply the effect sizes $\widehat{\bm{\theta}}$ in this prior by a vector factor $\bm{d}=(d_1,\ldots,d_m)^\top$ that takes on values between the vector of zeros $\bm{0}_m$ and a vector of ones $\bm{1}_m$, with shrinkage factor $\bm{s}=\bm{1}_m-\bm{d}$ that can be chosen based on previous replication studies in the same field \citep[as in][for the $m=1$ case]{MicheloudHeld22}. Further, a prior for $\bm{\mu}$ can be based on a (non-diagonal) s.p.d. covariance matrix $\Sigma_{\widehat{\theta}}$ for $\bm{\theta}$. More generally, a power prior distribution \citep{IbrahimEtAl15} can be specified for $(\bm{\mu})$ (or even for $(\bm{\mu},\bm{P},\bm{\nu})$), defined a posterior distribution constructed from previous historical data (likelihood) and a prior density for the parameters \citep{PawelEtAl24}.

Simulation Algorithm \ref{alg:MTPpower} calculates the predictive powers for either the B-MTP, H-MTP, BY-MTP, and/or DP-MTP, for $m$ given hypothesis tests, applicable to either an offline or online multiple testing scenario. This algorithm efficiently samples from the specified prior distribution for $\bm{\mu}$, and rapidly samples from the uniform prior distribution on $m\times m$ correlation matrices using the \citet{PourahmadiWang15} algorithm, run by the \texttt{randcorr} package \citep{MakalicEtAl18} of the \texttt{R} software \citep{RCoreTeam25}.

{Algorithm \ref{alg:MTPpower} employs a default ``objective" uniform prior that support all valid correlation matrices, for reasons explained in \S1. However, when necessary, the algorithm can be easily modified in obvious ways to multiple testing situations where the prior on the correlation matrix is informative, such as a prior supporting correlation matrices that have either an equi-correlated structure, or alternatively an AR(1) structure (e.g., for time-series or repeated measures designs).}

\newpage
\begin{algorithm}[!ht]
    \caption{Simulation algorithm to compute the predictive marginal powers for each MTP.}
    \label{alg:MTPpower}
    \centering
    \begin{tabular}{l}
    \textit{Inputs:} The $m\geq1$ test procedures used to test a given set of null hypotheses $\{H_{0,j}\}_{j=1}^m$, and their:\\
    \hspace{1.15cm} Test statistics $\bm{T}=(T_1,\ldots,T_m)^\top$; Type(s) of test(s) (lower-, upper, or two-tailed test);\\
    \hspace{1.15cm} Ratios, $(\widehat{\bm{\theta}}\oslash\bm{\sigma}_{\widehat{\bm{\theta}}})=(\widehat{\theta}_1/\sigma_{\widehat{\theta}_1},\ldots,\widehat{\theta}_m/\sigma_{\widehat{\theta}_m})^\top,$ of anticipated or estimated \\
    \hspace{1.15cm} effect sizes $\widehat{\bm{\theta}}=(\widehat{\theta}_1,\ldots,\widehat{\theta}_m)^\top$ divided by their corresponding standard errors; \\
    \hspace{1.15cm} Degrees of freedoms, $\bm{\nu}=(\nu_1,\ldots,\nu_m)^\top$; Total Type I error, $\alpha$; indicator function $\bm{1}(\cdot)$;\\
    \hspace{1.15cm} Choice of MTP to use: B-MTP, H-MTP, BY-MTP, and/or DP-MTP;\\
    \hspace{1.15cm} Number of samples of DP r.p.m.s for DP-MTP if used: $N$ (e.g., $N\equiv1000$);\\
    \hspace{1.15cm} Number of sampling iterations for power analysis, $S$ (e.g., $S\equiv5000$).\\
    \hline
    for $s=1,\ldots,S$ do: \\
    \qquad(a) Draw $\bm{\mu}^{(s)}\sim\mathcal{N}_m(\widehat{\bm{\theta}}\oslash\bm{\sigma}_{\widehat{\bm{\theta}}},\bm{I}_m)$ \hspace{.1cm} (or draw from another prior for $\bm{\mu}$ mentioned in paper).    \\
    \qquad(b) Draw $\bm{P}^{(s)}\sim\mathrm{LKJ}(\eta\equiv1)$\hspace{0.95cm}(using \citet{PourahmadiWang15} algorithm).\\
    \qquad(c) Draw $\bm{t}^{(s)}\sim\mathcal{NT}_m(\bm{\mu}^{(s)},\bm{P}^{(s)},\bm{\nu})$ \hspace{.6cm}(using stochastic representation eq.(\ref{eq:MNTdf})).\\
    \qquad(d) From each $t_j^{(s)}\in\bm{t}_m^{(s)}=(t_1^{(s)},\ldots,t_m^{(s)})^\top$, for $j=1,\ldots,m$, calculate (as intended) the \\
    \hspace{1.22cm} lower-, upper-, and/or 2-tailed $p$-values
    $p_1^{(s)}\equiv p(t_1^{(s)}),\ldots,p_j^{(s)}\equiv p (t_j^{(s)}),\ldots p_m^{(s)}\equiv p (t_m^{(s)})$.\\
    \qquad(e) Apply each used B-MTP, H-MTP, BY-MTP, and/or DP-MTP, on $\bm{p}^{(s)}$, at level $\alpha$, and\\
    \hspace{1.22cm} for sorted $p$-values $p^{(s)}_{(1)}\leq\ldots\leq p^{(s)}_{(r)}\leq\ldots\leq p^{(s)}_{(m)}$, calculate for $r=1,\ldots,m$,\\
    \hspace{1.22cm} $d_{\alpha,\text{M},(r)}=\bm{1}\{p_{(r)}\leq\Delta_{\alpha,\nu,\text{M}}(H_{0,(r)})\}$, for MTP $\text{M}\in\{\text{B},\text{H},\text{BY}\}$,\\
    \hspace{1.22cm} $d_{\alpha,\text{DP},(r)}^{(s)}\equiv\frac{1}{N}\sum_{g=1}^Nd_{\alpha,\text{DP},(r),g}^{(s)}$, where $d^{(s)}_{\alpha,\text{DP},(r),g}\equiv\bm{1}\{p^{(s)}_{(r)}\leq\Delta_{\alpha,\nu^{(s)}_g,\text{M}}(H_{0,(r)})\}$, \\
    \hspace{1.22cm} to yield: $(d_{\alpha,\text{B},(r)}^{(s)})_{r=1}^{m}$, $(d_{\alpha,\text{H},(r)}^{(s)})_{r=1}^{m}$, $(d_{\alpha,\text{BY},(r)}^{(s)})_{r=1}^{m}$, and/or $(d_{\alpha,\text{DP},(r)}^{(s)})_{r=1}^{m}$, as intended.\\
    end for
    \\
    \hline
    \textit{Output:} For each MTP used, $\text{M}\in\{\text{B},\text{H},\text{BY},\text{DP}\}$, the calculated estimates of:\\
    \hspace{1.23cm} \textbf{Predictive marginal powers}: \hspace{.13cm} $\overline{\textbf{pmp}}_{\alpha,\text{M}}\equiv\tfrac{1}{S}\sum\limits_{s=1}^S(d^{(s)}_{\alpha,\text{M},(r)})_{r=1}^m=(\bar{d}_{\alpha,\text{M},(r)})_{r=1}^m$;\\
    \hspace{1.23cm} \textbf{Predictive average power}: \hspace{.52cm} $\overline{\text{pap}}_{\alpha,\text{M}}\equiv\tfrac{1}{S}\tfrac{1}{m}\sum\limits_{s=1}^S\sum\limits_{r=1}^md^{(s)}_{\alpha,\text{M},(r)}$;\\
    \hspace{1.23cm} \textbf{Predictive disjunctive power}: \hspace{.01cm} $\overline{\text{pdp}}_{\alpha,\text{M}}\equiv\tfrac{1}{S}\sum\limits_{s=1}^S\bm{1}(\sum_{r=1}^md^{(s)}_{\alpha,\text{M},(r)}\geq1)$;\\
    \hspace{1.23cm} \textbf{Predictive conjunctive power}: \hspace{.02cm}$\overline{\text{pcp}}_{\alpha,\text{M}}\equiv\tfrac{1}{S}\sum\limits_{s=1}^S\bm{1}(\sum_{r=1}^md^{(s)}_{\alpha,\text{M},(r)}=m)$.\\
    \hspace{1.23cm} $p$\textbf{-value weights:} \hspace{2.1cm} $\widehat{\pi}_{\text{M}}(H_{0,(r)})\equiv\frac{\mathrm{exp}[\mathrm{log}(\bar{d}_{\alpha,\text{M},(r)})-\widehat{a}]}{\sum_{k=1}^m\mathrm{exp}[\mathrm{log}(\bar{d}_{\alpha,\text{M},(k)})-\widehat{a}]}$, for $r=1,\ldots,m$,\\
    \hspace{8cm} where $\widehat{a}\equiv \mathrm{max}_{k=1,\ldots,m}[\mathrm{log}(\bar{d}_{\alpha,\text{M},(k)})]$.\\
    \hspace{1.23cm} \textbf{Significance Chasing Bias (Hellinger Distance)}:\\ 
    \hspace{1.23cm} $\text{SigChase}_{(r)}\equiv{\frac{1}{\sqrt{2}}}[\{d^{1/2}_{\alpha,\text{M},(r)}(x) - \bar{d}^{1/2}_{\alpha,\text{M},(r)}\}^2 + \{(1-d_{\alpha,\text{M},j}(x))^{1/2} - (1-\bar{d}_{\alpha,\text{M},(r)})^{1/2}\}^2]$;\\
    \hspace{1.23cm} for sorted $p$-values $p_{(1)}\leq\ldots\leq p_{(r)}\leq\ldots\leq p_{(m)}$ and nulls $\{H_{0,(r)}\}_{r=1}^m$ tested on data $x$,\\
    \hspace{1.23cm} with $d_{\alpha,\text{M},(r)}(x)=\bm{1}\{p_{(r)}\leq\Delta_{\alpha,\nu,\text{M}}(H_{0,(r)})\}$, for $r=1,\ldots,m$, for MTP $\text{M}\in\{\text{B},\text{H},\text{BY}\}$,\\
    \hspace{1.23cm} and/or $d_{\alpha,\text{DP},(r)}\equiv\frac{1}{N}\sum_{g=1}^Nd_{\alpha,\text{DP},(r),g}=\frac{1}{N}\sum_{g=1}^N \bm{1}\{p_{(r)}\leq\Delta_{\alpha,\nu_g,\text{M}}(H_{0,(r)})\}$, as intended.\\
    \\
    \hspace{1.23cm} \textbf{Speed of Monte Carlo convergence of} $\bar{h}_{S,\text{M}}=\tfrac{1}{S}\sum_{s=1}^sh(\bm{t}_s)$:\\
    \hspace{5.7cm} Assessed by sample variance $v_{h,S}=\tfrac{1}{S^2}\sum_{s=1}^S\{h(\bm{t}_m^{(s)})-\bar{h}_S\}^2$,\\ 
    \hspace{5.7cm} for each $h\in\overline{\textbf{pmp}}_{\alpha,\text{M}}$ 
    and each $h\in\{\overline{\text{pap}}_{\alpha,\text{M}}, \overline{\text{pdp}}_{\alpha,\text{M}},\overline{\text{pcp}}_{\alpha,\text{M}}\}$.\\
    \end{tabular}
\end{algorithm}

\newpage
{Also, a typical hypothesis test is most often defined by a test statistic that has a known distribution under the null hypothesis, most often a normal or Student t distribution (at least approximately). This is why \S3-\S4 and the MTP power analysis Algorithm \ref{alg:MTPpower} emphasize the use of these distributions by default. In other situations, the hypothesis test statistic may have a non-normal, discrete, and/or unknown distribution under the null hypothesis (the latter which can be estimated through permutation- or bootstrap-based simulation), for one or more hypothesis test(s) among the $m$ total tests performed. In these other hypothesis testing situations, it is straightforward to use the non-normal and/or estimated distribution of the hypothesis test statistic under the null hypothesis, instead of the normal or Student $t$- null hypothesis distribution(s) within Algorithm \ref{alg:MTPpower} in order to conduct an MTP power analysis that would be more accurate than a power analysis assuming a null or Student $t$ null hypothesis distribution. Recall from \S2 that the super-uniformity condition does not make specific requirements on the distribution of the test statistic under the null hypothesis.}

\section{Case Study Illustration}\label{Section:Illustration}

The applicability of the new MTP predictive power analysis method is showcased through the analysis of  two-tailed $p$-values from forty-one hypothesis tests, presented in the third column of Table \ref{tab:Needleman}. 

These results were reported by a study \citep{NeedlemanEtAl79} of the neuropsychologic effects of childhood lead exposure measured from the shed baby teeth that teachers collected from first- and second-grade Massachusetts schoolchildren during 1975-1978. This landmark study applied an innovative non-invasive way to collect data on accumulated lead content in children (e.g., lead from ingested wall paint), and yielded statistical results that motivated much further research on lead poisoning and related areas, and inspired actions by the Center for Disease Control to lower the blood lead standard for children, and by the U.S. Congress and Environmental Protection Agency (EPA) to eliminate lead from gasoline, paint, plumbing, and other uses. Still, this study was criticized for methodological flaws, including that it reported $p$-values that did not jointly control for multiple comparisons for all reported $p$-values \citep[see pp.279-283][for further discussion]{WestfallYoung93}. Therefore, these $p$-values (subsets) have been reanalyzed several times by the MTP literature \citep[][Table 1]{BenjaminiYekutieli01}.

The teeth measurements identified a group of 100 children with low lead exposure ($<6$ ppm; lowest 10th percentile) and a group of 58 children with high lead exposure ($>24$ ppm; highest 10th percentile). Both groups were compared on each of 11 items of a teachers' behavior ratings scale, using respective ($\chi^2$) $z$-tests of equal group proportions of negative behavior responses. These groups were also compared using analysis of covariance (ANCOVA) $t$-tests, comparing group mean: total sum behavioral rating score; sub-scale and full-scale (or sum) scores on the Wechsler Intelligence Scales for Children measuring verbal IQ and performance IQ; and scores from Seashore and Token tests of verbal processing; and reaction times varied by four time intervals. Each ANCOVA controlled for $5$ covariates: mother’s age at subject’s birth; mother’s educational level; father’s socioeconomic status; number of pregnancies; and parental IQ. 

{In total, $m=41$ hypothesis tests were performed, which Table \ref{tab:Needleman} summarizes, including the number of tests performed for each kind of testing procedure and variable.} The fourth column of Table \ref{tab:Needleman} shows the test statistic ($t$) derived from each two-tailed $p$-value, obtained by $t_j = \Phi^{-1}(1-p_j/2)$ for each of the $z$-tests (i.e., tests $j=1,\ldots,11$ in Table \ref{tab:Needleman}), and by $t_j=\mathcal{T}^{-1}(1-p_j/2\mid0,1,\nu_j=151)$ for each of the ANCOVA $t$-tests (tests $j=12,\ldots,41=m$), where $\Phi$ denotes the standard normal $\mathcal{N}(0,1)$ CDF, and $\mathcal{T}$ is the standard $t$ CDF on $\nu_j=n_j-(k+2)=158-(5+2)=151$ degrees of freedom, based on $k=5$ covariates and 2 treatment groups. Each test statistic $t$ is an absolute ratio of an effect size measured by a raw group mean difference, divided by the standard error of this difference, where as appropriate, the effect size is either a between-group difference in proportions for a $z$-test, or a difference in dependent outcome means for an ANCOVA $t$-test.

The predictive MTP power analysis Algorithm \ref{alg:MTPpower} using $\alpha= 0.05$ was applied for $S=5000$ sampling iterations, to evaluate the powers of the DP-MTP (for $N\equiv1000$ iterations), the Bonferroni MTP, Holm MTP, and the Benjamini–Yekutieli MTP (BY-MTP), based on $41$-variate normal $\mathcal{N}_{41}(\widehat{\bm{\theta}}\oslash\bm{\sigma}_{\widehat{\bm{\theta}}},\bm{I}_{41})$ prior for $\bm{\mu}$, with mean vector $\widehat{\bm{\theta}}\oslash\bm{\sigma}_{\widehat{\bm{\theta}}}$ specified by the fourth column of Table \ref{tab:Needleman}, the vector of test statistics. Further, while a uniform LKJ prior is assigned to the correlation matrix of the 41 test statistics, the degrees of freedom for these test statistics were set by $\nu_j\rightarrow \infty$ for the $z$-tests of proportions indexed by $j=1,\ldots,11$ for each of the 11 behavior items, and set by $\nu_j\equiv151$ for each of the ANCOVA $t$-tests ($j=12,\ldots,41$). (For each ANCOVA $t$-test, \citet{NeedlemanEtAl79} did not provide $r^2$ of the combined 5 adjusting covariates nor the residual variance of the unadjusted dependent responses for both treatments, which could permit a more exact ANCOVA power analysis \citep{Shieh20}). Overall, for each of the four MTPs, this MTP predictive power analysis is for a future replication study applying the same 41 null hypothesis tests and total sample sizes $n_j=158$ used for each test, based on an ignorable shrinkage factor vector, $\bm{d}\equiv\bm{1}_m$ (a vector of $m$ ones) implying no over-optimism in the vector of $m$ effect sizes $\widehat{\bm{\theta}}$. The Supplementary Materials \citep{Karabatsos26supp} provides the software code used to run this MTP power analysis, including \texttt{R} code from the \texttt{bnpMTP} package \citep{Karabatsos25bnpMTP} and from the other cited packages. 

\begin{table}[!ht]
\centering
\begin{tabular}{ccccccccc}
  \hline
Test & Scale & $p$-value & $t$ stat. & MargPwr & $p$-weights & PrSig & PrSig.w & sigChase \\ 
  \hline
  1 & Behavior 1 			& 0.003 & 2.97 & 0.46 & 0.05 & 0.28 & 0.53$^\text{BY}$ & 0.13 \\ 
  2 & Behavior 2 			& 0.05 & 1.96 & 0.23 & 0.02 & 0.00 & 0.00 & 0.35 \\ 
  3 & Behavior 3 			& 0.05 & 1.96 & 0.23 & 0.02 & 0.00 & 0.00 & 0.35 \\ 
  4 & Behavior 4 			& 0.14 & 1.48 & 0.15 & 0.01 & 0.00 & 0.00 & 0.28 \\ 
  5 & Behavior 5 			& 0.08 & 1.75 & 0.18 & 0.02 & 0.00 & 0.00 & 0.31 \\ 
  6 & Behavior 6 			& 0.01 & 2.58 & 0.37 & 0.04 & 0.02 & 0.08$^\text{BY}$ & 0.35 \\ 
  7 & Behavior 7 			& 0.04 & 2.05 & 0.25 & 0.03 & 0.00 & 0.00 & 0.37 \\ 
  8 & Behavior 8 			& 0.01 & 2.58 & 0.38 & 0.04 & 0.02 & 0.08$^\text{BY}$ & 0.36 \\ 
  9 & Behavior 9 			& 0.05 & 1.96 & 0.23 & 0.02 & 0.00 & 0.00 & 0.35 \\ 
  10 & Behavior 10 		& 0.003 & 2.97 & 0.47 & 0.05 & 0.28 & 0.53$^\text{BY}$ & 0.14 \\ 
  11 & Behavior 11 		& 0.003 & 2.97 & 0.47 & 0.05 & 0.28 & 0.53$^\text{BY}$ & 0.14 \\ 
  12 & Sum Behavior 		& 0.02 & 2.35 & 0.30 & 0.03 & 0.00 & 0.00 & 0.40 \\ 
  13 & Verbal IQ 1 		& 0.04 & 2.07 & 0.25 & 0.03 & 0.00 & 0.00 & 0.37 \\ 
  14 & Verbal IQ 2 		& 0.05 & 1.98 & 0.23 & 0.02 & 0.00 & 0.00 & 0.35 \\ 
  15 & Verbal IQ 3 		& 0.02 & 2.35 & 0.31 & 0.03 & 0.00 & 0.00 & 0.41 \\ 
  16 & Verbal IQ 4 		& 0.49 & 0.69 & 0.06 & 0.01 & 0.00 & 0.00 & 0.18 \\ 
  17 & Verbal IQ 5 		& 0.08 & 1.76 & 0.18 & 0.02 & 0.00 & 0.00 & 0.30 \\ 
  18 & Verbal IQ 6 		& 0.36 & 0.92 & 0.08 & 0.01 & 0.00 & 0.00 & 0.20 \\ 
  19 & Performance IQ 1 	& 0.03 & 2.19 & 0.28 & 0.03 & 0.00 & 0.00 & 0.39 \\ 
  20 & Performance IQ 2 	& 0.38 & 0.88 & 0.07 & 0.01 & 0.00 & 0.00 & 0.19 \\ 
  21 & Performance IQ 3 	& 0.15 & 1.45 & 0.14 & 0.01 & 0.00 & 0.00 & 0.27 \\ 
  22 & Performance IQ 4 	& 0.54 & 0.61 & 0.05 & 0.01 & 0.00 & 0.00 & 0.17 \\ 
  23 & Performance IQ 5 	& 0.90 & 0.13 & 0.03 & 0.00 & 0.00 & 0.00 & 0.13 \\ 
  24 & Performance IQ 6 	& 0.37 & 0.90 & 0.08 & 0.01 & 0.00 & 0.00 & 0.20 \\ 
  25 & Full Verbal IQ 	& 0.03 & 2.19 & 0.27 & 0.03 & 0.00 & 0.00 & 0.38 \\ 
  26 & Full Perf. IQ 		& 0.03 & 2.19 & 0.28 & 0.03 & 0.00 & 0.00 & 0.39 \\ 
  27 & Full VerbalPerf.IQ & 0.08 & 1.76 & 0.20 & 0.02 & 0.00 & 0.00 & 0.32 \\ 
  28 & Seashore 1 		& 0.002 & 3.15 & 0.50 & 0.05 & 0.38 & 0.70$^\text{B,BY}$ & 0.09 \\ 
  29 & Seashore 2 		& 0.03 & 2.19 & 0.26 & 0.03 & 0.00 & 0.00 & 0.37 \\ 
  30 & Seashore 3 		& 0.07 & 1.82 & 0.19 & 0.02 & 0.00 & 0.00 & 0.32 \\ 
  31 & Total Seashore 	& 0.002 & 3.15 & 0.51 & 0.05 & 0.38 & 0.70$^\text{B,BY}$ & 0.09 \\ 
  32 & Token 1 			& 0.37 & 0.90 & 0.07 & 0.01 & 0.00 & 0.00 & 0.19 \\ 
  33 & Token 2 			& 0.90 & 0.13 & 0.04 & 0.00 & 0.00 & 0.00 & 0.14 \\ 
  34 & Token 3 			& 0.42 & 0.81 & 0.07 & 0.01 & 0.00 & 0.00 & 0.19 \\ 
  35 & Token 4 			& 0.05 & 1.98 & 0.21 & 0.02 & 0.00 & 0.00 & 0.34 \\ 
  36 & Total Token 		& 0.09 & 1.71 & 0.18 & 0.02 & 0.00 & 0.00 & 0.30 \\ 
  37 & Sentence 			& 0.04 & 2.07 & 0.24 & 0.02 & 0.00 & 0.00 & 0.36 \\ 
  38 & Reaction Time 1 	& 0.32 & 1.00 & 0.08 & 0.01 & 0.00 & 0.00 & 0.20 \\ 
  39 & Reaction Time 2 	& 0.001 & 3.36 & 0.56 & 0.06 & 0.57$^\text{B,H}$ & 0.76$^\text{B,H,BY}$ & 0.01 \\ 
  40 & Reaction Time 3 	& 0.001 & 3.36 & 0.55 & 0.05 & 0.57$^\text{B,H}$ & 0.76$^\text{B,H,BY}$ & 0.02 \\ 
  41 & Reaction Time 4 	& 0.01 & 2.61 & 0.37 & 0.04 & 0.02 & 0.08$^\text{BY}$ & 0.35 \\ 
   \hline
\end{tabular}
\caption{\small Two-tailed $p$-values from \citet[][Tables 3, 7-8]{NeedlemanEtAl79}; test ($t$) statistics; and DP-MTP results of marginal predictive power (MargPwr); $p$-value weights; probability of discovery for each $p$-value, PrSig (each weighted $p$-value PrSig.w); and significance chasing bias index in Hellinger distance. Superscripts are significant discoveries according to un/weighted B-MTP, H-MTP, or BY-MTP.} 
\label{tab:Needleman}
\end{table}

The following results of the predictive MTP power analysis provided by Algorithm \ref{alg:MTPpower} are as follows, for the DP-MTP, B-MTP, H-MTP, and BH-MTP. The DP-MTP can be considered as the baseline MTP method, because as mentioned earlier, the DP prior supports the B-MTP, H-MTP, and BH-MTP, and all other MTPs which control the FWER or FDR under arbitrarily-correlated $p$-values. For the 41 hypothesis tests, the DP-MTP, B-MTP, H-MTP, and BH-MTP obtained average power 0.25, 0.21, 0.22, and 0.26, respectively; all with disjunctive power 1.00 and conjunctive power 0. 

{Table \ref{tab:Needleman} shows the marginal power for each of the 41 hypothesis tests, estimated by Algorithm \ref{alg:MTPpower}. These are respectively the marginal powers of the 41 test statistics, with respect to the prior predictive distribution of the test statistics under their alternative hypotheses, under the prior centered on the observed effect sizes, and the uniform prior for the correlation matrices. The table shows that the marginal power tends to increase with the absolute $t$-statistic, a function of the absolute effect size, as intuitively expected. In other words, these are the marginal powers of the 41 hypothesis tests after accounting for arbitrarily-correlated $p$-values (test statistics), powers which are different across these tests. The marginal powers are not very large for the tests, which reflects the relatively small sample size of the study and confirms the known fact that there is a decrease in power for individual tests in multiple testing scenarios \citep{SennBretz07}. Further,} the DP-MTP marginal powers (Table \ref{tab:Needleman}) are similar to those of the three other MTPs (Figure \ref{fig:MTP_Scatterplots}, left side). 

For DP-MTP, Table \ref{tab:Needleman} also shows the $p$-value weights, and probability of significance discovery (PrSig and PrSig.w) for each unweighted and weighted $p$-value among the 41 total tests. These results are compared with the significance discovery indicators for each of the unweighted and weighted versions of B-MTP, H-MTP, and BY-MTP, obtained from the \texttt{p.adjust()} and \texttt{p.adjust.w()} codes of \texttt{R} package $\texttt{someMTP}$ \citep{Finos21}. DP-MTP indicates that any test results in a discovery by PrSig $>$ 0 (or by PrSig.w $>$ 0 for weighted testing), while providing uncertainty quantification in multiple testing. The Table shows that zero values of PrSig (and PrSig.w) correspond to non-discoveries found by the other MTPs.

For DP-MTP, Table \ref{tab:Needleman} shows the significance chasing bias index for each of the 41 $p$-values (tests), where non-bias is indicated by a small index value. The 41 bias indices of DP-MTP tended to be lower than the corresponding indices of B-MTP, H-MTP, and BY-MTP (Figure \ref{fig:MTP_Scatterplots}, right side), perhaps because of the DP prior supports all MTPs controlling the FWER or FDR instead of one significance cutoff for $p$-values.

{We now consider a prior sensitivity analysis for the DP-MTP, with respect to four different levels of the shrinkage factor (\S3). The shrinkage (and corresponding sensitivity analysis) can address the fact that, for a future
replication setting, the previously observed test statistics defining the prior mean can possibly be affected by overestimation and the winner’s curse, and so the shrinkage can help rule out any systematic optimism in the reported power estimates.}

\begin{figure}[!ht]
    \centering
    \includegraphics[width=1\linewidth]{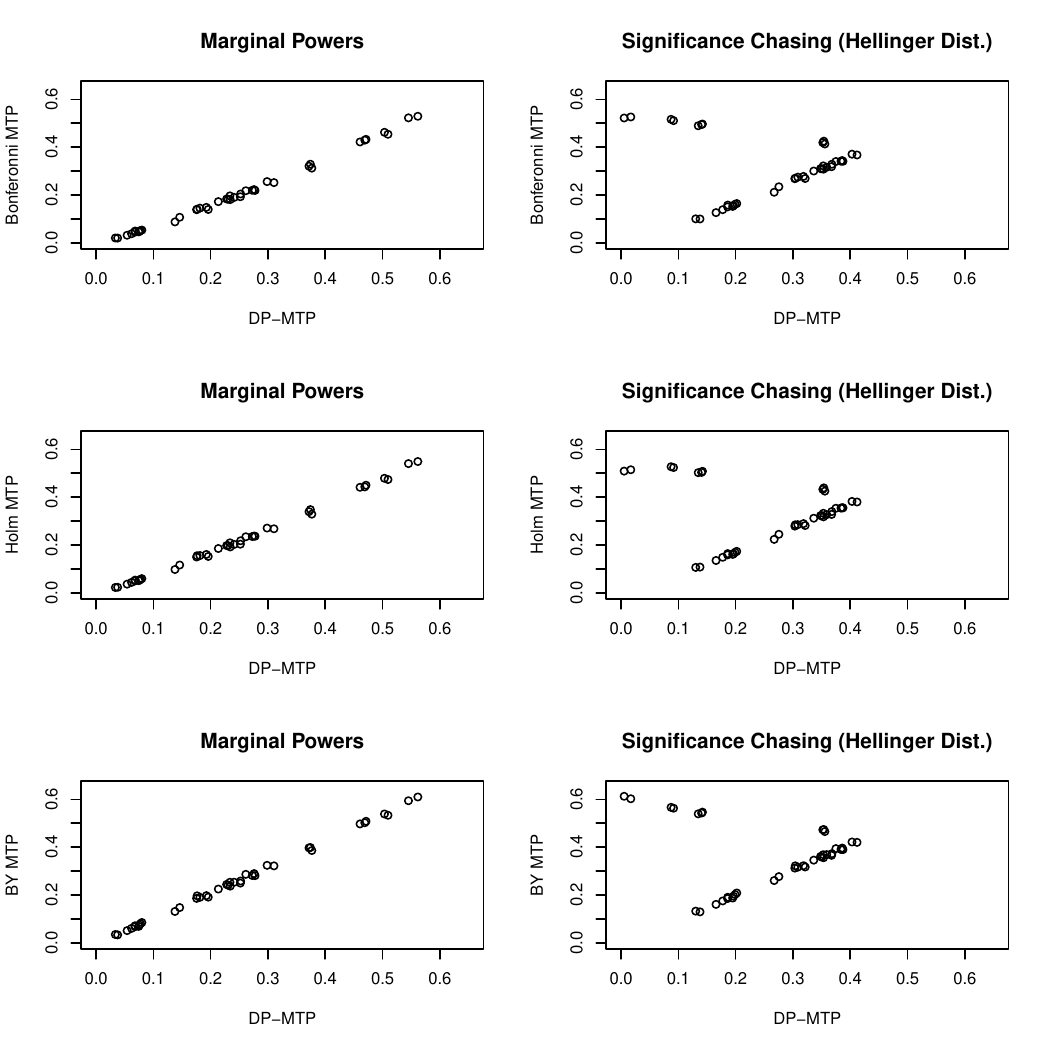}
    \caption{For the 41 tests, comparing marginal powers and significance chasing biases between 4 MTPs.}
    \label{fig:MTP_Scatterplots}
\end{figure}

\begin{figure}[!ht]
    \centering
    \includegraphics[width=1\linewidth]{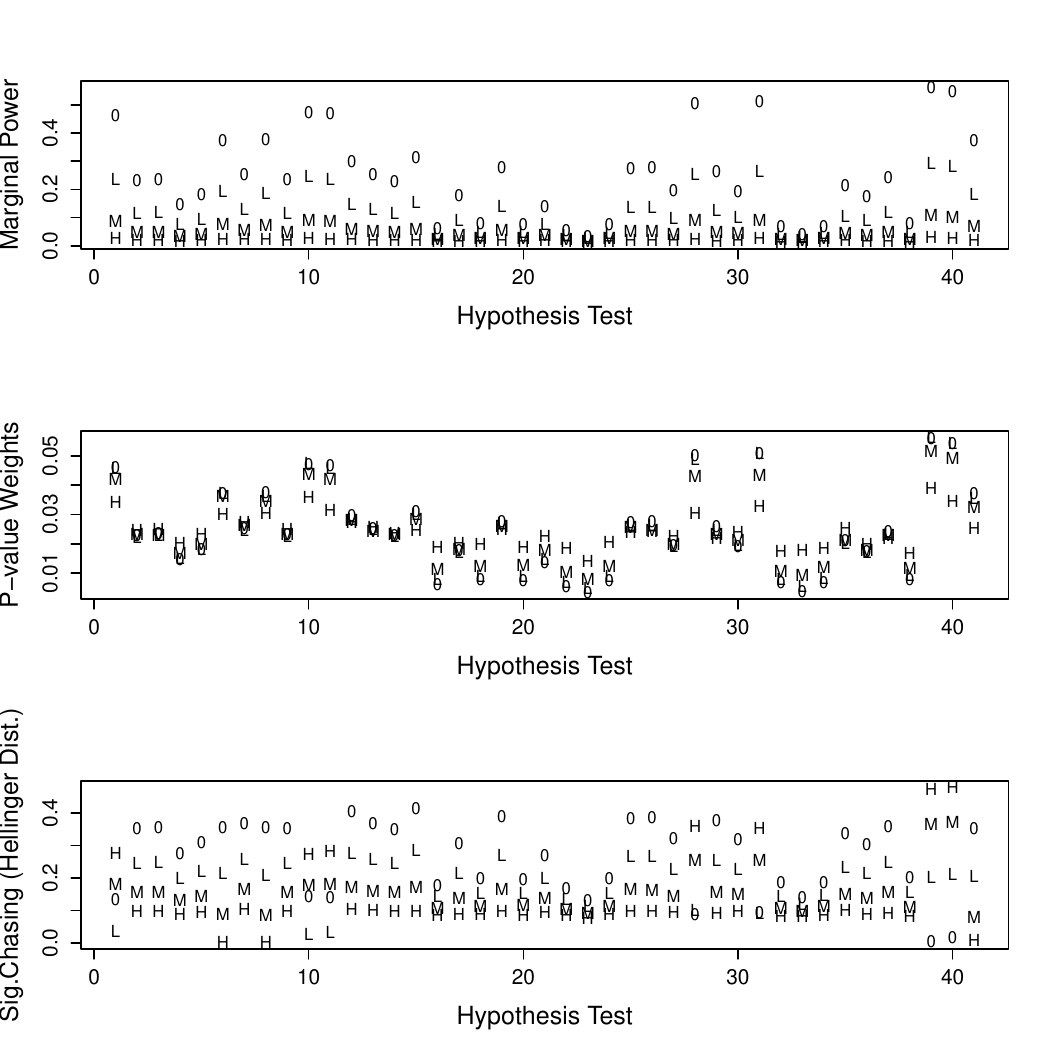}
    \caption{For the 41 tests analyzed by the DP-MTP, comparing marginal powers, $p$-value weights, and significance chasing biases across four levels of shrinkage, namely: Zero-shrinkage (0), low shrinkage of $s=1/4$ (L), medium shrinkage of $s=1/2$ (M), and high shrinkage of $s=3/4$ (H).}
    \label{fig:MTP_Scatterplots_ShrinkSens}
\end{figure}

%d = rep(1,m ) # No  shrinkage: s = 1 - d = rep(  0  , m )
%d = (3/4)*rep( 1,m) # Low shrinkage: s = 1 - d = rep( 1/4 , m )
%d = (2/4)*rep( 1,m) # Med shrinkage:s = 1 - d = rep( 1/2 , m )
%d = (1/4)*rep( 1,m) # Hi  shrinkage:s = 1 - d = rep( 3/4 , m )

{Specifically, we consider shrinkage factors $s=0$ (already considered above), $1/4$, $1/2$, and $3/4$, such that each shrinkage factor $s$ is defined as the same for all 41 hypothesis tests, and they are respectively referred to as zero, low, medium and large shrinkage factors. For these shrinkage factors, the DP-MTP respectively obtained: average powers of 0.25, 0.13, 0.05, and 0.02; disjunctive powers of 1.00, 0.98, 0.69, 0.29; and conjunctive powers of zero.}

{Figure \ref{fig:MTP_Scatterplots_ShrinkSens} compares the marginal powers, $p$-value weights, and significance-chasing Hellinger distance measures across these four shrinkage levels. Overall, as expected, the measures of power decreases with increasing shrinkage level. Interestingly, the estimated $p$-value weights are rather insensitive to varying shrinkage. The significance-chasing measures mostly increased with decreasing shrinking level, but not always, and not necessarily in a strictly-ordered fashion. As reasonably expected, these measures can vary with respect to the specified prior distribution on the effect sizes.}

\section{Theorem}\label{Section:Theorem} 

Algorithm \ref{alg:MTPpower} addresses MTP predictive power analysis problem by evaluating the expectation:
\begin{equation}\label{eq:MCintegral}
\mathbb{E}_f\{h(\bm{T})\}=\int\dots\int_\mathcal{T}h(\bm{t}_m)f(\bm{t}_m)\text{d}t_1,\ldots,\text{d}t_m,
\end{equation}
where $f$ is the marginal prior predictive probability density function of test statistics $\bm{t}_m=(t_1,\ldots,t_m)^\top$: 
\begin{equation}\label{eq:marginal_pdf}
f(\bm{t}_m)\equiv\int\dots\int\mathcal{NT}_m(\bm{t}_m\mid\bm{\mu},\bm{P},\bm{\nu})\mathrm{LKJ}(\bm{P}\mid\eta\equiv1)\text{d}\mu_1\cdots\text{d}\mu_m\text{d}\rho_{1,2}\cdots\text{d}\rho_{m-1,m},
\end{equation}
and where $h:\mathbb{R}^m\rightarrow\mathbb{R^+}$ is a bounded function $h(\cdot)$ of $\bm{t}_m$ corresponding to $m$ ordered $p$-values $\underline{\bm{p}}(\bm{t}_m)\equiv(p_{(r)}\equiv p(t_{(r)}))_{r=1}^m$, and defined either for a vector of predictive marginal powers: $\textbf{pmp}_{\alpha,\text{M}}=(h\{p(t_{(r)})\}\equiv d_{\alpha,\text{M},(r)})_{r=1}^m\equiv(\bm{1}\{p_{(r)}\leq\Delta_{\alpha,\nu,\text{M}}(H_{0,(r)})\})_{r=1}^m$; for predictive average power: $h\{\underline{\bm{p}}(\bm{t}_m)\}\equiv\text{pap}=\frac{1}{m}\sum^m_{r=1}d_{\alpha,\text{M},(r)}(x)$; for disjunctive power: $h\{\underline{\bm{p}}(\bm{t}_m)\}\equiv\text{pdp}=\bm{1}(\sum_{r=1}^md_{\alpha,\text{M},{(r)}}\ge1)$; or for conjunctive power: $h\{\underline{\bm{p}}(\bm{t}_m)\}\equiv\text{pcp}=\bm{1}(\sum_{r=1}^md_{\alpha,\text{M},{(r)}}=m)$.

For any choice of these functions $h$, the power analysis algorithm estimates $\mathbb{E}_f\{h(\bm{T})\}$ by $\bar{h}_S=\tfrac{1}{S}\sum_{s=1}^Sh(\bm{t_m}^{(s)})$, and assesses the speed of convergence of $\bar{h}_S$ by the sample variance:
\begin{equation}\label{eq:MCvar}
v_{h,S}=\tfrac{1}{S^2}\sum_{s=1}^S\{h(\bm{t_m}^{(s)})-\bar{h}_S\}^2.
\end{equation}

\begin{theorem}
The estimator $\bar{h}_S$ almost surely (a.s.) converges $\bar{h}_S\overset{a.s.}{\rightarrow}\mathbb{E}_f\{h(\bm{T})\}$ by the Strong Law of Large Numbers (SLLN), and $v_S$ is an estimate of the variance:
\begin{equation}\label{eq:varMCintegral}
\mathbb{V}\{\bar{h}_S\}=\frac{1}{S}\int\dots\int_\mathcal{T}[h(\bm{t}_m)-\mathbb{E}_f\{h(\bm{T})\}]^2f(\bm{t}_m)\mathrm{d}t_1,\ldots,\mathrm{d}t_m.
\end{equation}
\end{theorem}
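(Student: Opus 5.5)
The plan is to treat the $S$ iterations of Algorithm \ref{alg:MTPpower} as producing an i.i.d.\ sample from the prior predictive density $f$ in (\ref{eq:marginal_pdf}), and then apply classical Monte Carlo arguments. First I will show that each draw $\bm{t}_m^{(s)}$ has density $f$. Steps (a)--(c) draw $\bm{\mu}^{(s)}$ from its prior, independently draw $\bm{P}^{(s)}\sim\mathrm{LKJ}(1)$, and then draw $\bm{t}_m^{(s)}$ from $\mathcal{NT}_m(\bm{\mu}^{(s)},\bm{P}^{(s)},\bm{\nu})$ via the stochastic representation (\ref{eq:MNTdf}) with fresh $\bm{Z}$ and $V_j$. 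By composition sampling (the tower property applied to the joint law of $(\bm{\mu},\bm{P},\bm{T})$), the marginal law of $\bm{T}$ is the mixture (\ref{eq:marginal_pdf}), with the normal prior for $\bm{\mu}$ implicitly included in the integrand. Since every iteration uses new independent randomness (including the $N$ DP draws in step (e) for the DP-MTP), the pairs $(\bm{t}_m^{(s)},\text{aux}^{(s)})$ are i.i.d.\ across $s$. Consequently the values $h(\bm{t}_m^{(s)})$, $s=1,\ldots,S$, are i.i.d.\ real random variables.

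Next I will verify integrability. Each choice of $h$ (an indicator $d_{\alpha,\text{M},(r)}$, an average of such indicators, or an indicator of a count event) takes values in $[0,1]$. It is also measurable, because the $p$-value maps, order statistics and threshold comparisons are Borel functions of $\bm{t}_m$. Hence $\mathbb{E}_f|h(\bm{T})|\le 1<\infty$, and Kolmogorov's SLLN gives $\bar{h}_S\overset{a.s.}{\rightarrow}\mathbb{E}_f\{h(\bm{T})\}$ as in (\ref{eq:MCintegral}). For vector-valued $h$ (the $\textbf{pmp}$ vector), I apply this componentwise. A finite intersection of probability-one events still has probability one, so a.s.\ convergence holds jointly. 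For the DP-MTP, $h$ is the average over $g$ of DP indicators. I will condition on, or absorb, the auxiliary DP draws into the i.i.d.\ sample, so that the limit is the expectation under $f$ jointly with the DP prior. This is still bounded in $[0,1]$.

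For the variance statement, boundedness gives $\sigma_h^2:=\mathbb{V}_f\{h(\bm{T})\}\le 1$, which is finite. Independence then gives $\mathbb{V}\{\bar{h}_S\}=\sigma_h^2/S$, and this is exactly (\ref{eq:varMCintegral}) once $\sigma_h^2$ is written as an integral against $f$. Next, $v_{h,S}=\frac{1}{S}\hat{\sigma}^2_S$, where $\hat{\sigma}^2_S=\frac{1}{S}\sum_s\{h(\bm{t}^{(s)}_m)-\bar{h}_S\}^2=\overline{h^2}_S-\bar{h}_S^2$. Applying the SLLN to $h^2$ (also bounded) and to $h$, together with the continuous mapping theorem, gives $\hat{\sigma}^2_S\overset{a.s.}{\rightarrow}\sigma_h^2$. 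Therefore $S\,v_{h,S}/\{S\,\mathbb{V}(\bar{h}_S)\}\to1$ a.s., so $v_{h,S}$ is a consistent estimate of (\ref{eq:varMCintegral}). Its expectation equals $\frac{S-1}{S}\mathbb{V}\{\bar{h}_S\}$, so the bias is asymptotically negligible. As a remark, the CLT then yields $\sqrt{S}(\bar{h}_S-\mathbb{E}_f h)\Rightarrow\mathcal{N}(0,\sigma_h^2)$, which justifies using $v_{h,S}$ to assess the speed of convergence.

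The main obstacle is the first step: rigorously identifying the output law with $f$ and establishing i.i.d.-ness. This requires the Pourahmadi--Wang sampler to return exact draws from $\mathrm{LKJ}(1)$, and requires that the $\nu_j\to\infty$ coordinates are interpreted as $V_j/\nu_j\equiv1$. I would handle both by stating them as properties of the samplers and the representation (\ref{eq:MNTdf}). Once that is done, the rest is standard, since boundedness of $h$ removes all moment concerns.
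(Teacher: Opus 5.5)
Your proposal is correct and takes essentially the same route as the paper's proof: treat the $h(\bm{t}_m^{(s)})$ as i.i.d.\ draws under $f$, use boundedness of $h$ (hence of $h^2$) to get finite first and second moments, and then invoke the SLLN for $\bar{h}_S$ and for the variance estimate $v_{h,S}$. You also make explicit several points the paper only asserts or leaves implicit: the composition-sampling argument that the draws have law $f$, measurability of $h$, the handling of the auxiliary DP draws, and the a.s.\ consistency and $(S-1)/S$ bias factor of $v_{h,S}$.
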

\begin{proof}
For the iid $h(\bm{t_m}^{(s)})$ obtained from $\bm{t_m}^{(s)}\overset{iid}{\sim}f$, $h<\infty$ implies $\mathbb{E}_f\{h(\bm{T})\}<\infty$ with finite second moment. Then, a.s. convergence of $\bar{h}_S$ under SLLN holds \citep[][\S22]{Billingsley95}, and $h^2$ has a finite expectation under $f$, which ensures that $v_S$ estimates (\ref{eq:varMCintegral}).
\end{proof}

Further, given that each of the above choices of function $h$ is binary (0 or 1) valued, its variance attains maximum $1/4$ when $\mathbb{E}_f\{h(\bm{T})\}=1/2$, which implies the upper bound $1/4S$ for its Monte Carlo variance (\ref{eq:MCvar}). For example, the case study in \S\ref{Section:Illustration} was based on running $S\equiv5000$ sampling iterations of the algorithm, yielding an upper bound of $1/4S=1/20000=0.00005$ for the Monte Carlo variance of each of the presented estimates of power, implying fast convergence of the Monte Carlo power analysis algorithm.

\section{Conclusions}\label{Section:ConclusionsDiscussion} 

This study proposed and illustrated a practical automatic method for congenially evaluating the power of any MTP which controls FWER or FDR under unknown and un-analyzed arbitrary dependencies (inter-correlations) between $p$-values. The predictive power analysis method is defined by a general prior distribution for the effect sizes and a joint uniform prior for the correlation matrix for the test statistics to fully account for uncertainty in these parameters. 

The new method not only can be used to marginal powers of tests (resp.) or corresponding sample size determinations (given desired marginal powers) for a future planned (e.g. replication or interim) study, but also the calculated marginal powers can either be used to weight $p$-values to increase MTP power while minimizing the relative impacts of any significance chasing biases, or compared with raw $p$-values to evaluate for the presence of such biases. 

{As mentioned, Algorithm \ref{alg:MTPpower} outputs consistent estimates of the marginal predictive powers of the given $m$ hypothesis tests, which can then be used as a basis for assigning weights to $p$-values, respectively, in a subsequent application of a weighted MTP based on these $p$-value weights. Recall that a weighted MTP, which assigns relatively higher weights to the subset of tested null hypotheses that are likely to be false, tends to have higher statistical power compared to unweighted multiple testing procedure. An interesting open question is how accurately can such an MTP detect the subset of truly false null hypotheses in any given multiple hypothesis testing problem, for a range of plausible correlation structures for the test statistics ($p$-values). This question deserves to be addressed extensively in future research.}

\section*{Acknowledgments}
This research is supported in part by National Institute for Health grant 1R01AA028483-01. {The author declares no conflict of interest, and gives special thanks to the journal editors and anonymous reviewers for editorial suggestions which have helped improve the presentation of this paper}, first made publicly available as an \texttt{arXiv} \texttt{arXiv:2603.07312} on March 7, 2026. Section \S2 of this paper was presented at the 14th International Conference on Bayesian Nonparametrics (BNP14) at UCLA on June 26, 2025. 

{
\appendix
\begin{center}
\Large\textbf{Appendix: A Review of Existing Algorithms for Optimal $p$-value Weighting, and for Calculating the Correlation Matrix of Test Statistics}
\end{center}

For Bonferroni MTPs, optimal $p$-value weights can be found using any one of various optimization algorithms that maximize either the: weighted power \citep{WestfallKrishen01}; expected number of rejections \citep{ZhangEtAl15}; average power based on explicit formulae \citep{RubinEtAl06,RoederWasserman09}; and disjunctive or conjunctive power, while maximizing the asymptotic $m$-variate normal distribution of $z$-scores of one-sided $p$-values for hypothesis tests, given their respective specified levels of marginal (conditional) powers \citep{XiChen24}. For graphical or chain MTPs, optimal $p$-value weights can be found using grid search or simulation algorithms \citep{BretzEtAl11pwr,WiensEtAl13,DmitrienkoEtAl15} or deep learning algorithms maximizing weighted power \citep{ZhanEtAl22}.

For specialized multiple testing problems employing $m$ hypothesis tests, the $m\times m$ correlation matrix of the test statistics can be derived either: for uncorrelated test statistics ($p$-values), perhaps from an orthogonal design or contrasts; or when their covariance matrix or joint distribution can be estimated, perhaps using permutation or bootstrap sampling methods via the \citet{WestfallYoung93} MTP or related MTPs; while such a covariance matrix estimate can be used to whiten-transform \citep{KessyEtAl18} the original test statistics into $m$ independent asymptotic normal $\mathcal{N}(0,1)$ test statistics; or for pairwise mean comparisons done via a Dunnett- or a Tukey-Kramer-type MTP. The \citet{Dunnett55} MTP, which compares the outcome mean between each of $m$ given treatment groups to the outcome mean of the control group, the correlation matrix can be calculated as a simple square-root function of group sample sizes, such that, for example, a balanced design produces a $m\times m$ matrix with all distinct pairwise correlations equaling $0.50$. Then the resulting $m$ pairwise mean comparison test statistics under the null hypotheses follows a $m$-variate $t$-distribution with zero-location(mean) vector and with scale matrix determined by these correlations and a common degrees of freedom \citep[][pp.1101-1103]{Dunnett55}. For the \citet{Tukey49}-\citet{Kramer56} MTP, which compares means between the distinct pairs of all treatment groups, explicit formulas for the correlation matrix of $m$ test statistics can be derived for certain, e.g., balanced incomplete block, ANCOVA, incomplete Block Lattice, or partially balanced designs; \citep{Kramer57}. But the results of this MTP can conflict with the result of the $F$-test of equality of means between all three or more treatment groups \citep{GurvichNaumova21}.}

\newpage
\bibliographystyle{apalike2}
\bibliography{Karabatsos.bib}

\end{document}